\documentclass[a4paper]{article}
\usepackage[utf8]{inputenc}
\usepackage{a4wide}
\usepackage{amssymb,amsmath,amsthm}
\usepackage{color}

\DeclareMathSymbol{:}{\mathord}{operators}{"3A}

\usepackage{mathtools}

\usepackage{hyperref}
\hypersetup{
    colorlinks,
    linkcolor={red!50!black},
    citecolor={blue!50!black},
    urlcolor={blue!80!black}
}
\usepackage[ocgcolorlinks]{ocgx2}
\usepackage[capitalise,nameinlink]{cleveref}
\usepackage[defaultlines=3,all]{nowidow}

\usepackage{graphicx}
\usepackage{enumitem}

\theoremstyle{plain}
\newtheorem{theorem}{Theorem}
\newtheorem{definition}{Definition}
\newtheorem{corollary}{Corollary}
\newtheorem{claim}{Claim}
\newtheorem{lemma}{Lemma}
\newtheorem{observation}{Observation}
\newtheorem{proposition}{Proposition}

\newenvironment{proofclaim}{
	\noindent \emph{Proof.}
}{%
	\hfill $\diamond$ \\
}

\theoremstyle{remark}

\usepackage{soul}
\theoremstyle{plain}
\theoremstyle{remark}
\usepackage[svgnames,x11names]{xcolor}
\definecolor{mygreen}{HTML}{90ee90}
\usepackage[colorinlistoftodos,backgroundcolor=OrangeRed!80,linecolor=OrangeRed1]{todonotes}

\title{On graphs coverable by $k$ shortest paths}
\author{Maël Dumas\footnote{Univ. Orléans, INSA Centre Val de Loire, LIFO EA 4022, F-45067 Orléans Cedex 2, France.} \and Florent Foucaud\footnote{\noindent Université Clermont Auvergne, CNRS, Clermont Auvergne INP, Mines Saint-Etienne, LIMOS, 63000 Clermont-Ferrand, France.}~\footnote{This author was financed by the ANR project GRALMECO (ANR-21-CE48-0004) and the French government IDEX-ISITE initiative 16-IDEX-0001 (CAP 20-25).} \and Anthony Perez\footnotemark[2] \and Ioan Todinca\footnotemark[2]}
\date{April 2022}

\begin{document}

\maketitle

\newcommand{\SGSR}{\textsc{Strong Geodetic Set with Terminals}}
\newcommand{\SGS}{\textsc{Strong Geodetic Set}}
\newcommand{\IPC}{\textsc{Isometric Path Cover}}
\newcommand{\IPCR}{\textsc{Isometric Path Cover with Terminals}}

\setlength {\marginparwidth }{2.9cm}

\newcommand{\ff}[1]{\textcolor{blue}{#1}}
\newcommand{\md}[1]{\textcolor{red}{#1}}
\newcommand{\io}[1]{\textcolor{magenta}{#1}}
\newcommand{\ap}[1]{\textcolor{orange}{#1}}

\def\eg{{\em e.g.}}
\def\etal{{\em et al.}}
\def\ie{{\em i.e.}~}
\def\cf{{\em cf.}}

\newcommand{\Gak}{\mathcal{G}^e_k}
\newcommand{\Gvk}{\mathcal{G}^v_k}
\newcommand{\B}{\mathcal{B}}
\newcommand{\Z}{\mathcal{Z}}
\newcommand{\PP}{\mathcal{P}}
\newcommand{\CC}{\mathcal{C}}
\newcommand{\CS}{\mathcal{CS}}
\newcommand{\LS}{\mathcal{LS}}
\newcommand{\LL}{\mathcal{L}}

\newcommand{\tw}{\operatorname{tw}}
\newcommand{\pw}{\operatorname{pw}}
\newcommand{\dist}{\operatorname{dist}}
\newcommand{\col}{\operatorname{col}}
\newcommand{\colP}{\operatorname{col}}
\newcommand{\lab}{\operatorname{lab}}
\newcommand{\monochr}{\operatorname{monochr}}
\newcommand{\Colours}{\operatorname{colours}}
\newcommand{\ColoursP}{\operatorname{colours}}
\newcommand{\ColoursS}{\operatorname{colours}}
\newcommand{\MSOL}{\operatorname{MSOL}_2}

\newcommand{\ColoursSigns}{\operatorname{ColoursSignsW}}
\newcommand{\ColoursSignsSet}{\operatorname{ColoursSigns}}
\newcommand{\GreedyColPath}{\operatorname{GreedyColPath}}
\newcommand{\WordColoursSigns}{colours-signs word}
\newcommand{\vertexcol}{vertex colouring}
\newcommand{\signedcolorcode}{signed colour code}
\newcommand{\canon}{\operatorname{CanonPath}}
\newcommand{\Canonization}{\operatorname{Canonize}}
\newcommand{\CanonCode}{\operatorname{CanonCode}}
\newcommand{\CanonPath}{\operatorname{CanonPath}}
\newcommand{\CanonCol}{\operatorname{CanonCol}}

\newcommand{\colarbo}{$k$-labelled branching}
\newcommand{\colarboration}{\operatorname{Colarboration}}

\newcommand{\Pb}[4]{%
\begin{center}
  \begin{tabular}{|l|}%
  \hline
    \begin{minipage}[c]{0.95\textwidth}
      \smallskip%
      \par\noindent%
      #1%
      \par\noindent%
      \textbf{\textsf{Input}}: #2%
      \par\noindent%
      \textbf{\textsf{Question}}: Does there exist %
      #3?%
      \smallskip%
      \par\noindent%
    \end{minipage}
  \\\hline
  \end{tabular}%
\end{center}%
\vspace{0.2cm}
}

\begin{abstract}
We show that if the edges or vertices of an undirected graph $G$ can be covered by $k$ shortest paths, then the pathwidth of $G$ is upper-bounded by a single-exponential function of $k$. As a corollary, we prove that the problem \IPCR{} (which, given a graph $G$ and a set of $k$ pairs of vertices called \emph{terminals}, asks whether $G$ can be covered by $k$ shortest paths, each joining a pair of terminals) is FPT with respect to the number of terminals. The same holds for the similar problem \SGSR{} (which, given a graph $G$ and a set of $k$ terminals, asks whether there exist $\binom{k}{2}$ shortest paths covering $G$, each joining a distinct pair of terminals). Moreover, this implies that the related problems \IPC{} and \SGS{} (defined similarly but where the set of terminals is not part of the input) are in XP with respect to parameter $k$.
\end{abstract}

\section{Introduction}

Path problems such as \textsc{Hamiltonian Path} are among the most fundamental problems in the field of algorithms. \textsc{Hamiltonian Path} can be generalized as the covering problem \textsc{Path Cover}~\cite{ANDREATTA95}, 
where one asks to cover the vertices of an input graph using a prescribed number of paths. The packing variant is \textsc{Disjoint Paths} where, given a set of $k$ pairs of terminal vertices of a graph $G$, one asks whether there are $k$ vertex-disjoint paths in $G$, each joining two paired terminals. \textsc{Disjoint Paths} is a fundamental problem and a precursor to the field of parameterized complexity due to the celebrated fixed-parameter tractable algorithm devised by Robertson and Seymour~\cite{RS95} for the parameter ``number of paths''. We recall that in the field of parameterized algorithms and complexity, one studies \emph{parameterized problems}, whose input $I$ comes together with a parameter $k$. A parameterized problem is said to be FPT (fixed-parameter tractable) if it can be solved in time $f(k)\cdot |I|^{O(1)}$, for some computable function $f$. If the problem can be solved in time $O(|I|^{f(k)})$, it belongs to class $XP$; see, e.g.,~\cite{CFK+15} for more details. 

In this paper, we will not consider arbitrary paths, but \emph{shortest paths}, which are fundamental for many applications. In the problem \textsc{Disjoint Shortest Paths}, given a graph $G$ and $k$ pairs of terminals, one asks whether $G$ contains $k$ vertex-disjoint shortest paths pairwise connecting the $k$ pairs of terminals. This problem was introduced in~\cite{Eilam-Tzoreff98} and recently shown to be polynomial-time solvable for every fixed $k$ by an XP algorithm~\cite{BNRZ21,Lochet21}. 
The problem \IPCR, which we define as follows, is the covering counterpart of \textsc{Disjoint Shortest Paths}. \\

\Pb{\IPCR}{A graph $G$ and $k$ pairs of vertices $(s_1,t_1), \dots, (s_k,t_k)$ called \textit{terminals}.}{a set of $k$ shortest paths, the $i$th path being an $s_i$-$t_i$ shortest path, such that each vertex of $G$ belongs to at least one of the paths}

The name \IPCR{} comes from the related \IPC{} problem where the terminals are not part of the input, which was introduced in~\cite{IPC} in the context of the Cops and Robbers game on graphs (see also~\cite{AF84}). 

\Pb{\IPC}{A graph $G$ and an integer $k$.}{a set of $k$ shortest paths such that each vertex of $G$ belongs to at least one of the paths}

Closely related variants of \IPCR{} and \IPC{} have been studied, in which there are only $k$ terminals, and one asks to find $\binom{k}{2}$ shortest paths joining each pair of terminals. The version without terminals has been called \SGS{} in the literature; we call the version with terminals \SGSR. It was first studied (independently) in~\cite{DIT21,BrazilMaster}, see also~\cite{preprintBresil}. 

\Pb{\SGSR}{A graph $G$ and a set of $k$ vertices of $G$ called \emph{terminals}.}{a set of $\binom{k}{2}$ shortest paths, each path joining a distinct pair of terminals, such that each vertex of $G$ belongs to at least one of the paths}

The variant where the terminals are not given in the input was defined in~\cite{SGS} as follows.

\Pb{\SGS}{A graph $G$ and an integer $k$.}{a set of $k$ terminals and a set of $\binom{k}{2}$ shortest paths, each path joining a distinct pair of terminals, such that each vertex of $G$ belongs to at least one of the paths}

The complexity of these problems has been studied in the literature. It was shown in~\cite{FloManuscrit} that \IPC{} is NP-complete, even for chordal graphs, a class for which the authors of~\cite{FloManuscrit} also provide a constant-factor approximation algorithm. For general graphs, it was shown in~\cite{TG21} that the problem can be approximated in polynomial time within a factor of $O(\log d)$, where $d$ is the diameter of the input graph. It was proven to be polynomial-time solvable on block graphs~\cite{IPC-block}. 
It is shown in~\cite{DIT21} that \SGSR{} is NP-hard. In~\cite{BrazilMaster}, it is shown that this holds even for bipartite graphs of maximum degree~4 or diameter~6, however \SGSR{} is polynomial-time solvable on split graphs, graphs of diameter~2, block graphs, and cactus graphs. We prove here (cf.~\cref{pr:npc}) that \IPCR{} is also NP-complete.

Finally, \SGS{} is known to be NP-hard~\cite{SGS}, even for bipartite graphs, chordal graphs, graphs of diameter $2$ and cobipartite graphs~\cite{BrazilMaster} as well as for subcubic graphs of arbitrary girth~\cite{DIT21}. However it is polynomial-time solvable on outerplanar graphs~\cite{M20}, cactus graphs, block graphs and threshold graphs~\cite{BrazilMaster}. 

All these problems can also be studied in their edge-covering version, where one requires to cover all edges of the input graph by the corresponding shortest paths. For instance, the \textsc{Strong Edge Geodetic Set} problem is studied in~\cite{MKXAT17}. 
The version of \IPC{} where the solution paths are required to be vertex-disjoint has also been studied under the names of \textsc{Isometric Path Partition}~\cite{M20a} and \textsc{Shortest Path Partition}~\cite{FloManuscritPP}.

\paragraph{Our results} 
Our main combinatorial theorems are as follows (see~\cref{sec:nota} for the definition of pathwidth).

\begin{theorem}\label{thm:twGak}
Let $G$ be a graph whose \textbf{edge set} can be covered by at most $k$ shortest paths. Then the pathwidth of $G$ is $O(3^k)$. 
\end{theorem}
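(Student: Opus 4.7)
My plan is to construct an explicit path decomposition of $G$ that exploits the metric structure imposed by the shortest-path cover. Let $P_1, \dots, P_k$ be shortest paths covering $E(G)$; I may assume every non-isolated vertex lies on some $P_i$, since isolated vertices do not affect pathwidth. Orient each $P_i$ from $s_i$ to $t_i$ of length $\ell_i$, and for each $v \in V(G)$ form the distance vector $\vec d(v) := (d_1(v), \dots, d_k(v)) \in \mathbb Z^k$ where $d_i(v) := d_G(v, s_i)$. The pivotal observation, combining the triangle inequality with the fact that along a shortest path $P_i$ the distance $d_i$ agrees with position on $P_i$, is that $\vec d(u) - \vec d(v) \in \{-1, 0, 1\}^k \setminus \{\vec 0\}$ along every edge of $G$. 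Thus $G$ embeds into $\mathbb Z^k$ via $\vec d$ so that adjacency is contained in the unit $L^\infty$-ball, and the $3^k$ in the theorem statement reflects the $3^k$ lattice points of such a ball.

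From this I would define a linear order $\prec$ on $V(G)$ derived from $\vec d$ --- a natural candidate being an order by $L(v) := \sum_i d_i(v)$ with tie-breaking (so that $|L(u) - L(v)| \le k$ along every edge, hence a bag spans at most $k+1$ levels of $L$) --- and form the canonical bags $B(v) := \{v\} \cup \{u \prec v : u \text{ has a neighbor } w \succeq v\}$, which give a valid path decomposition whose width is $\max_v |B(v)| - 1$. A useful auxiliary fact is that at most $k$ vertices of $V(G)$ share any given distance vector $\vec a$, since a vertex lying on $P_i$ with $d_i = a_i$ is forced to be the unique vertex $v_{a_i}^i$ at position $a_i$ on $P_i$.

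The heart of the proof is then the bound $|B(v)| = O(3^k)$. For each $u \in B(v)\setminus\{v\}$ with witness edge $uw$ and $w \succeq v$, the relation $\vec d(w) - \vec d(u) \in \{-1, 0, 1\}^k$ together with $\vec d(u) \preceq \vec d(v) \preceq \vec d(w)$ should confine $\vec d(u)$ to an $L^\infty$-neighborhood of $\vec d(v)$ of size $O(3^k)$; combining with the per-$\vec d$ multiplicity bound then gives the desired estimate. The main obstacle I expect is making the ``localization'' rigorous: a naive lex order on $\vec d$ will not suffice, because lex places $L^\infty$-close lattice points arbitrarily far apart and hence admits bags in which $u$ can have $\vec d(u)$ very different from $\vec d(v)$. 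Overcoming this will likely require either using a weight-type ordering such as $L$ (so that being in the same bag forces $L^\infty$-proximity in $\vec d$) or a finer case analysis that groups vertices by path membership and exploits the per-path monotonicity of distances along shortest paths. Once the bag-size bound is obtained, the theorem follows immediately because the pathwidth equals the maximum bag size minus one.
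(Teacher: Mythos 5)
Your setup is sound as far as it goes: the map $v\mapsto \vec d(v)$ is well defined, adjacent vertices differ by a nonzero vector in $\{-1,0,1\}^k$ (nonzero because every edge lies on some $P_i$ and positions along a shortest path are strictly monotone), and at most $k$ vertices can share a distance vector, since a vertex on $P_i$ is determined by the pair $(i,d_i)$. These observations are genuinely different from the tools the paper uses (the paper bounds BFS layers from a single root $a$, encoding each vertex at distance $D$ by the \emph{colours-signs word} of a well-coloured shortest $a$-path and then counting words via a branching tree). But the crucial step of your plan is missing, and I do not think it can be filled without essentially re-deriving the paper's machinery.

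The gap is exactly where you flag it: the claim that the ordering by $L(v)=\sum_i d_i(v)$ confines the vectors of $B(v)$ to an $L^\infty$-ball of radius $O(1)$ around $\vec d(v)$. From $L(u)\le L(v)\le L(w)$ and $\|\vec d(u)-\vec d(w)\|_\infty\le 1$ you can only conclude $L(v)-L(u)\le k$; you cannot conclude that individual coordinates of $\vec d(u)$ and $\vec d(v)$ are close. Concretely, along a single base path $P_i$ the value $L$ changes by anything between $2-k$ and $k$ per step, so $L$ can plateau for arbitrarily long stretches of $P_i$, meaning an $L$-window of width $k$ can contain $\Omega(n)$ vertices. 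Even when you add the implicit metric constraints ($|d_i(v)-d_j(v)|\le d_G(s_i,s_j)$ and $d_i(v)+d_j(v)\ge d_G(s_i,s_j)$), the resulting control on $\|\vec d(u)-\vec d(v)\|_\infty$ scales with the pairwise terminal distances $d_G(s_i,s_j)$, not with $k$. The same difficulty appears if you try to sidestep the ordering and just bound how many distance vectors occur in a BFS layer: fixing $d_1(v)=D$ leaves the other coordinates free to range over an interval of length $\Theta(d_G(s_1,s_j))$. What actually makes the count $O(3^k)$ is not the distance vector at the endpoint but the combinatorial structure of a well-coloured shortest path to it (which colours occur and in what signed order); this is the content of the paper's Good Colouring Lemma, the Colours-Signs Encoding Lemma, and the ``$k$-labelled branching'' tree analysis, none of which is visible from the distance-vector viewpoint alone. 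A secondary point: even if the $L^\infty$-localisation held, your multiplicity bound would give $O(k\cdot 3^k)$ rather than $O(3^k)$, which incidentally matches the paper's bound for the \emph{vertex}-cover case (\cref{thm:twGvk}), not the edge-cover case.
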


\begin{theorem}
\label{thm:twGvk}
Let $G$ be a graph whose \textbf{vertex set} can be covered by at most $k$ shortest paths. Then the pathwidth of $G$ is $O(k\cdot 3^k)$.
\end{theorem}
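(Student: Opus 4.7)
The plan is to reduce \cref{thm:twGvk} to \cref{thm:twGak} via an auxiliary spanning subgraph. Let $P_1, \ldots, P_k$ be shortest paths in $G$ whose vertex sets cover $V(G)$, and define $G_0 := (V(G), \bigcup_{i=1}^k E(P_i))$ by discarding every edge of $G$ not lying on any $P_i$. Each $P_i$ remains a shortest $s_i$--$t_i$ path in $G_0$: since $G_0$ is a spanning subgraph of $G$, distances in $G_0$ are at least those in $G$, while $P_i \subseteq G_0$ witnesses $d_{G_0}(s_i,t_i)\leq |P_i|=d_G(s_i,t_i)$. Thus $G_0$ is edge-covered by $k$ shortest paths and \cref{thm:twGak} gives $\pw(G_0) = O(3^k)$.

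A useful first observation about the deleted edges $E(G) \setminus E(G_0)$ is that each has its endpoints on \emph{distinct} paths: if $u,v \in V(P_i)$, the isometry of $P_i$ in $G$ forces $d_{P_i}(u,v)=d_G(u,v)=1$, so $uv \in E(P_i) \subseteq E(G_0)$, a contradiction. Starting from a path decomposition $\mathcal{B}_0=(B_t)_t$ of $G_0$ of width $O(3^k)$, one would then build a path decomposition of $G$ by enlarging each bag $B_t$ with, for each path $P_i$, a set $X_{t,i} \subseteq V(P_i)$ of size $O(3^k)$ that covers the cross-path edges incident to the current bag. Summing over the $k$ paths yields bags of size $O(k \cdot 3^k)$, so $\pw(G) = O(k \cdot 3^k)$.

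The \textbf{main obstacle} is choosing $X_{t,i}$ small yet sufficient. A natural candidate is the subpath of $P_i$ spanning the positions of the vertices of $V(P_i) \cap B_t$, which succeeds if those positions form a short consecutive window on $P_i$; establishing such a window property for the path decomposition of $G_0$ produced in the proof of \cref{thm:twGak} is likely the technical crux, and should follow from the fact that the construction there already tracks positions along each path. A fallback approach is an induction on $k$ in which a single path is stripped off, paying $O(3^k)$ extra vertices per bag (to cover its vertices and their interactions with the rest) and invoking the inductive hypothesis on the remaining $k-1$ paths.
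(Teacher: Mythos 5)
Your opening move (form $G_0$ from the union of the paths' edges and apply~\cref{thm:twGak}) is sound: each $P_i$ is indeed a shortest path in $G_0$, and your observation that every deleted edge of $G$ has its endpoints on distinct $P_i$'s is correct. However, the argument stops exactly where the real difficulty begins, and I do not believe the reduction can be completed along the lines you sketch. The path decomposition of $G_0$ that~\cref{thm:twGak} supplies is the BFS-layer decomposition of~\cref{le:pwBFS}, computed in $G_0$. Distances in $G_0$ can be far larger than in $G$: a deleted edge $uv$ forces $d_G(u,v)=1$, yet $d_{G_0}(u,v)$ may be arbitrary, so $u$ and $v$ may land in $G_0$-BFS layers that are very far apart. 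There is then no bag close to containing both endpoints, and you cannot fix this just by padding bags with short ``windows'' of each $P_i$: the positions of $V(P_i)\cap B_t$ along $P_i$ need not be consecutive (nor is any such window property established, or even plausible, for the $G_0$-BFS decomposition), and any enlarged family of bags must still satisfy the interval condition for every vertex. Your fallback (strip one path and induct) has the same flavour of gap: after deleting $V(P_k)$, the remaining $P_i\setminus V(P_k)$ are in general neither paths nor shortest in the residual graph, so the inductive hypothesis does not apply.

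The paper's proof does not go through a reduction to~\cref{thm:twGak} at all. It reruns the entire Section~3 machinery for \emph{vertex}-colourings: good colourings are redefined on vertices (\cref{le:goodcolV}), and since the clean injectivity of~\cref{le:colsign} genuinely fails for vertex-covers (\cref{fig:difpathsamecol} gives a counterexample), the colarboration-tree construction of~\cref{lem:boundak} is adapted with two new complications --- the modified paths may exceed length $D$ by up to $2k$, and a leaf of the tree may correspond to up to $2k+1$ distinct endpoints. Feeding this into~\cref{lem:colarbo} yields $O(k\cdot 3^k)$ vertices per $G$-BFS layer (\cref{le:distV}), after which~\cref{le:pwBFS} gives the pathwidth bound. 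So the extra factor of $k$ comes from the multiplicity at the leaves of the colarboration tree, not from handling $k$ paths' worth of cross edges as in your plan. If you want to salvage your approach you would have to control $G$-distances (not $G_0$-distances) from the outset, which is essentially what forces the paper to redo the colarboration argument on $G$ directly.
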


We actually show that in such a graph $G$, given an arbitrary vertex $a$ and an integer $D$, the number of vertices at distance exactly $D$ from $a$ is upper-bounded by a function of $k$, and the bound does not depend on the size of the input graph. It follows that a very simple linear-time algorithm based on a breadth-first search provides a path decomposition whose width is upper-bounded by twice the aforementioned function of $k$. The complexity of the algorithm computing the path decomposition does not depend on $k$.

Besides the combinatorial bounds, we employ the celebrated theorem of Courcelle~\cite{Courcelle90}, stating that problems expressible in Monadic Second-Order Logic ($\MSOL$) can be solved in linear time for graphs of bounded treewidth (and thus, of bounded pathwidth). More precisely, we reduce the problem \IPCR{} to an optimization problem expressible in $\MSOL$.  The result can also be obtained by dynamic programming but the algorithm would be tedious and not particularly efficient, therefore we prefer the general logic-based framework for further extensions. The running time is linear in $n$, the number of vertices of the graph, but super-exponential in the parameter $k$. Together with~\cref{thm:twGvk}, this implies the following.

\begin{theorem}\label{thm:algoMain}
\IPCR{} and \SGSR{} are FPT when parameterized by the number of terminals. 
\end{theorem}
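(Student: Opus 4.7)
The plan is to combine the pathwidth bound of~\cref{thm:twGvk} with the optimization extension of Courcelle's theorem. Given an instance $(G, (s_1,t_1), \dots, (s_k,t_k))$ of \IPCR, I would first run the linear-time BFS-based algorithm alluded to above to compute a path decomposition of $G$. By~\cref{thm:twGvk}, if $G$ admits a cover by $k$ shortest paths then $\pw(G) = O(k\cdot 3^k)$, so whenever the resulting width exceeds this bound we may safely answer NO. Otherwise we have in hand a path decomposition of width $O(k\cdot 3^k)$ together with the pairwise distances $d_i = \dist_G(s_i,t_i)$, which can be precomputed by $k$ breadth-first searches in polynomial time.

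With bounded pathwidth available, I would describe the target by an $\MSOL$ formula $\phi(P_1,\dots,P_k)$ whose free variables are edge sets. The formula asserts (i) that each $P_i$ induces a simple $s_i$-$t_i$ path (a standard $\MSOL$ construction: in the subgraph $(V,P_i)$ the vertices $s_i,t_i$ have degree one, the remaining non-isolated vertices have degree two, and this subgraph is connected when restricted to its non-isolated vertices), and (ii) that $\bigcup_i V(P_i) = V(G)$. Using the optimization extension of Courcelle's theorem, I would then compute the minimum value $\mu$ of $\sum_i |P_i|$ over all tuples satisfying $\phi$; this is FPT in the pathwidth and in $|\phi|$, both of which are functions of $k$.

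Correctness rests on a single numerical observation: any simple $s_i$-$t_i$ path contains at least $d_i$ edges, so $\sum_i |P_i| \ge \sum_i d_i$ for every admissible tuple, with equality exactly when each $P_i$ is a shortest path. Hence the instance is positive iff $\mu = \sum_i d_i$. The same strategy handles \SGSR: with $k$ terminals one introduces $\binom{k}{2}$ edge-set variables $P_{ij}$, requires each to induce a simple path between terminals $i$ and $j$ and their vertex sets to cover $V(G)$, and compares the minimum total length with $\sum_{i<j} \dist_G(s_i,s_j)$. The pathwidth of any yes-instance is $O(k^2 \cdot 3^{k^2})$ by~\cref{thm:twGvk} applied with $\binom{k}{2}$ paths, so the procedure remains FPT.

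The key difficulty I anticipate is that ``shortest path'' is not directly expressible in $\MSOL$, since comparing unbounded distances lies outside its scope, and simple tricks such as labelling vertices by BFS-layer parity are insufficient (a simple path made entirely of non-lateral edges can still zigzag between consecutive layers and exceed the true distance). The optimization framework circumvents this obstacle elegantly: rather than enforcing shortest-ness inside the logic, it minimises total length and reduces the whole problem to a single arithmetic comparison against the precomputed $d_i$'s.
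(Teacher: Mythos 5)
Your proposal is correct and follows essentially the same route as the paper: bound the pathwidth via \cref{thm:twGvk}, compute a decomposition by BFS and reject if it is too wide, then invoke the optimization (EMS) extension of Courcelle's theorem to minimize the total edge count of the covering sets and compare against $\sum_i \dist(s_i,t_i)$, with \SGSR{} handled as \IPCR{} on $\binom{k}{2}$ pairs. The only cosmetic difference is that the paper's formula merely asks each $E_i$ to induce a connected subgraph containing $s_i,t_i$ (which still forces $|E_i|\ge\dist(s_i,t_i)$, and at equality the subgraph is automatically a shortest path), whereas you explicitly encode ``simple $s_i$-$t_i$ path'' via degree constraints; both are valid.
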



\begin{corollary}\label{cor:IPC-SGS-XP}
\IPC{} and \SGS{} are in XP when parameterized by the number of paths, respectively terminals.
\end{corollary}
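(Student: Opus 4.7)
The plan is to derive the corollary by a direct brute-force reduction to the parameterized versions with terminals, using \cref{thm:algoMain} as a black box.

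For \IPC{}, observe that any valid solution consists of $k$ shortest paths, each determined (in terms of what pairs one must connect) by its two endpoints. Thus, given an instance $(G,k)$ of \IPC{} with $|V(G)|=n$, the plan is to enumerate all $k$-tuples of ordered pairs of vertices $((s_1,t_1),\dots,(s_k,t_k))\in (V(G)\times V(G))^k$, of which there are at most $n^{2k}$. For each such tuple, call the FPT algorithm from \cref{thm:algoMain} on the \IPCR{} instance $(G,(s_1,t_1),\dots,(s_k,t_k))$, and answer \textbf{yes} iff at least one call returns \textbf{yes}. Correctness is immediate: any solution to the \IPC{} instance specifies $k$ endpoint pairs which, when used as terminals, yield a \textbf{yes}-instance of \IPCR{}; and conversely any \textbf{yes} answer to a terminal-equipped subinstance gives a valid isometric path cover of $G$. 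Since the FPT algorithm runs in time $f(k)\cdot n^{O(1)}$, the total running time is $n^{2k}\cdot f(k)\cdot n^{O(1)}=n^{O(k)}$, which places \IPC{} in XP.

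For \SGS{}, the argument is essentially identical but one only needs to guess the set of $k$ terminals (the pairing is then fixed: all $\binom{k}{2}$ pairs must be connected). Enumerate all $\binom{n}{k}\le n^k$ subsets $T\subseteq V(G)$ of size $k$, and for each call the FPT algorithm for \SGSR{} on $(G,T)$. Again, correctness follows because any \SGS{} solution specifies such a terminal set, and any positive \SGSR{} answer yields a valid strong geodetic set. The total running time is $n^k\cdot f(k)\cdot n^{O(1)}=n^{O(k)}$, showing \SGS{} is in XP.

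There is no real obstacle here: the corollary is a standard ``guess the annotation, then call the FPT oracle'' argument. The only mild subtlety is to note that the parameter of \IPCR{} (the number of pairs of terminals) coincides with the parameter $k$ of \IPC{} (the number of paths), so the function $f(k)$ from \cref{thm:algoMain} is indeed applied with the same $k$; likewise for \SGS{}/\SGSR{}. Hence both problems admit algorithms of the form $n^{O(k)}$, which is precisely the XP regime.
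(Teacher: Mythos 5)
Your argument is correct and matches the paper's own (one-sentence) justification exactly: enumerate all $n^{O(k)}$ possible terminal pairs (resp.\ terminal sets) and invoke the FPT algorithm from \cref{thm:algoMain} on each resulting \IPCR{} (resp.\ \SGSR{}) instance. Nothing to add.
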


Thanks to the flexibility of Monadic Second-Order Logic and to~\cref{thm:twGak}, our algorithmic results easily extend to the edge-covering versions of our problems, and to variants where we require the paths to be edge-disjoint, or vertex-disjoint like \textsc{Isometric Path Partition}, studied in~\cite{FloManuscritPP,M20a}. The second part of~\cref{thm:algoMain} answers positively a question asked in~\cite{preprintBresil}.

\paragraph{Outline} After some preliminaries in~\Cref{sec:nota}, we prove~\cref{thm:twGak} in~\Cref{sec:edge,sec:vertex}. More specifically,~\Cref{sec:edge} provides the upper bound on the pathwidth of graphs whose edges are coverable by $k$ shortest paths, then the tools are extended to vertex-coverings in the next section. Algorithmic consequences (\cref{thm:algoMain}) are derived in~\Cref{sec:algo}, and we conclude with some open questions.

\section{Preliminaries and notations}\label{sec:nota}

\paragraph{Paths and concatenation operators $\oplus$ and $\odot$}

We refer to~\cite{DiestelB} for usual notations on graphs. In this paper we only consider undirected, unweighted graphs. For simplicity, we assume that our input graph $G=(V,E)$ is connected, though all our combinatorial and algorithmic results extend to non-connected graphs. As usual $N(x)$ denotes the neighbourhood of vertex $x$.

A path $P$ of graph $G=(V,E)$ is a sequence of distinct vertices $(x_1,\dots , x_l)$ such that for each $i, 1 \leq i \leq l-1$, $\{x_i,x_{i+1}\}$ is an edge of the graph. We also say that $P$ is an $x_1$-$x_l$ path. Note that our paths are simple as they do not use twice the same vertex. 
We denote by $V(P)$ the vertices of path $P$, and by $E(P)$ its edges.  Given two vertices $x,y \in V(P)$, we denote by $P[x,y]$ the subpath of $P$ between $x$ and $y$. Let $|P|$ denote the \emph{length} of path $P$, that is, its number of edges. 
The \emph{distance} between two vertices $a$ and $b$ in $G$ is denoted $\dist(a,b)$ and corresponds to the length of a shortest $a$-$b$ path.

Throughout the paper, we will construct paths by concatenation operations. It is convenient to think of our paths as directed: when we speak of an $a$-$b$ path, we think of it as being directed from $a$ to $b$. 

Given two vertex disjoint paths $\nu = (x_1,\dots , x_l)$ and  $\eta = (y_1,\dots , y_t)$ of $G$ such that $\{x_l,y_1\}$ is an edge of $G$, we define the \emph{concatenation operator} $\oplus$ whose result is $\nu \oplus \eta = (x_1, \ldots, x_l, y_1, \dots, y_t)$.  In particular, $|\nu \oplus \eta| = |\nu|+|\eta|+1$. 

We define similarly the \emph{glueing operator} $\odot$ between two paths $\nu = (x_1, \ldots, x_l)$ and $\eta = (x_l, y_1, \ldots, y_t)$ with $V(\nu) \cap V(\mu) = \{x_l\}$ by $\nu \odot \eta = (x_1,\ldots,x_l,y_1,\ldots,y_t)$. 
Note that in this case 
 $|\nu \odot \eta| = |\nu| + |\eta|$.

\paragraph{Path decompositions through breadth-first search}

A path decomposition of $G=(V,E)$ is a sequence $\PP = (X_1,X_2,\dots, X_q)$ of vertex subsets of $G$, called \emph{bags}, such that for every edge $\{x,y\} \in E$ there is at least one bag containing both endpoints, and for every vertex $x \in V$, the bags containing $x$ form a continuous sub-sequence of $\PP$. The width of  $\PP$ is $\max \{|X_i| - 1 \mid 1 \leq i \leq q\}$, and the \emph{pathwidth} $\pw(G)$ of $G$ is the minimum width over all path decompositions of $G$. 

The \emph{treewidth} $\tw(G)$ of graph $G$ is defined similarly (see e.g.~\cite{DiestelB}), using a so-called tree decomposition; for our purpose, we only need to know that for any graph $G$, $\tw(G) \leq \pw(G)$, in particular any path decomposition is also a tree decomposition of the same width. We also need the following folklore lemma on path decompositions. %

\begin{lemma}\label{le:pwBFS}
Let $G = (V,E)$ be a graph, $a$ be a vertex of $G$ and $K$ be an upper bound on the number of vertices of $G$ at distance exactly $D$ from $a$, for any integer $D$.

Then, $\pw(G) \leq 2K-1$. Moreover, a path decomposition of width $2K-1$ can be computed in linear time, by breadth-first search.
\end{lemma}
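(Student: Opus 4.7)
The plan is to run a breadth-first search (BFS) from $a$ and use the resulting distance layers to construct the path decomposition directly. For each integer $D \geq 0$, let $L_D = \{v \in V \mid \dist(a,v) = D\}$, so by hypothesis $|L_D| \leq K$. Let $d$ be the eccentricity of $a$ (the maximum $D$ such that $L_D \neq \emptyset$); because $G$ is connected, the layers $L_0, L_1, \ldots, L_d$ partition $V$.

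Next I define the bags: for $D = 0, 1, \ldots, d-1$, set $X_D = L_D \cup L_{D+1}$, and take $\PP = (X_0, X_1, \ldots, X_{d-1})$. Clearly $|X_D| \leq 2K$, so the width is at most $2K-1$. It remains to check the two axioms of a path decomposition. For the edge axiom, any edge $\{u,v\} \in E$ satisfies $|\dist(a,u) - \dist(a,v)| \leq 1$ (a standard property of BFS layers), so $u$ and $v$ lie either in the same layer $L_D$ or in consecutive layers $L_D, L_{D+1}$; in either case they are both contained in $X_D$. For the continuity axiom, a vertex $v \in L_D$ appears only in $X_{D-1}$ (when $D \geq 1$) and $X_D$ (when $D \leq d-1$), which is a contiguous subsequence of $\PP$.

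Finally, for the algorithmic statement, a BFS from $a$ computes the layers $L_0, \ldots, L_d$ in time $O(|V| + |E|)$, and the bags $X_D = L_D \cup L_{D+1}$ can be output in the same time bound (each vertex is listed in at most two bags). Since the construction does not require knowing $K$ in advance and never depends on $k$, its runtime is fully independent of the parameter.

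There is essentially no obstacle here; the only subtlety is checking that the BFS-layer property $|\dist(a,u) - \dist(a,v)| \leq 1$ for every edge $\{u,v\}$ (which follows from the triangle inequality applied to shortest paths from $a$) genuinely prevents any edge from skipping a layer, so that bags of the form $L_D \cup L_{D+1}$ suffice and bags spanning three consecutive layers are unnecessary. This is what keeps the width at $2K-1$ rather than $3K-1$.
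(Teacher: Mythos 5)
Your proof is correct and follows essentially the same route as the paper: a BFS from $a$, bags formed by unions of consecutive layers $L_D \cup L_{D+1}$, and a direct verification of the two path-decomposition axioms using the standard BFS property that adjacent vertices differ by at most one in distance from $a$. The paper's argument is identical in structure, so there is nothing to add.
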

\begin{proof}
Let $\operatorname{ecc}(a)$ be the eccentricity of vertex $a$ (i.e., $\max_{x \in V} \dist(a,x)$). For any $D$ with $0 \leq D \leq \operatorname{ecc}(a)$ we denote by $\operatorname{Layer}(D)$ the set of vertices at distance exactly $D$ from $a$, i.e. the layers of a breadth-first search on $G$ starting at $a$. Observe that, by taking as bags the unions $\operatorname{Layer}(D) \cup \operatorname{Layer}(D+1)$ of pairs of consecutive layers,  $0 \leq D < \operatorname{ecc}(a)$, and by ordering them according to $D$, we obtain a path decomposition of $G$. Indeed for each edge $\{x,y\}$ both endpoints are in the same layer or in two consecutive layers, thus will appear in the same bag. For each vertex $x$, it appears in at most two bags: if $d = \dist(a,x)$ then $x$ is in bags $\operatorname{Layer}(d-1) \cup \operatorname{Layer}(d)$ and $\operatorname{Layer}(d) \cup \operatorname{Layer}(d+1)$ (or one bag if $d=0$ or $d=\operatorname{ecc}(a)$), and these bags appear consecutively in the decomposition. Since each layer has at most $K$ vertices, the width of this decomposition is at most $2K-1$.
\end{proof}

\section{Edge-covering with $k$ shortest paths}
\label{sec:edge}

We start by proving~\cref{thm:twGak}, upper bounding the pathwidth of graphs $G=(V,E)$ that are edge-coverable by $k$ shortest paths. In this case, there is a simple and elegant encoding of shortest paths leading to a factorial upper bound, given in~\cref{ss:factorial}. This bound is improved to a single-exponential one in~\cref{ss:improving}. We recall that the case of vertex-coverings, which is more technical, will be studied in~\cref{sec:vertex}.

In this section, $G=(V,E)$ denotes a graph whose edge set is coverable by $k$ shortest paths. Let us fix such a set of paths $\mu_1,\dots,\mu_k$, and call them the \emph{base paths} of $G$. All constructions in this section are built on this particular set of base paths (without explicitly recalling it for each lemma, in order to ease the notations).  
We endow each base path $\mu_c,\ 1 \leq c \leq k$ with an arbitrary \emph{direction}. E.g., assuming that the vertices of $G$ are numbered from $1$ to $n$, the direction of path $P$ is from its smallest towards its largest end-vertex. 
A (directed) subpath $\mu_c[x,y]$ of $\mu_c$ is given a positive \emph{sign} $+$ if it follows the direction of $\mu_c$, otherwise it is given a negative sign $-$. 
For each edge $e$ of $G$, let $\Colours(e)$ be the set of all values $c \in \{1\dots, k\}$ such that $e$ is an edge of $\mu_c$.

\paragraph{Good colourings} Let $P$ be an $a$-$b$ path of $G$, from vertex $a$ to vertex $b$. A \emph{colouring} of $P$ is a function $\col : E(P) \rightarrow \{1,\ldots, k\}$ assigning to each edge $e$ of the path one of its colours $\col(e) \in \Colours(e)$. The colouring $\col$ of $P$ is said to be \emph{good} if, for any colour $c$, the set of edges using this colour form a connected subpath $P[x,y]$ of $P$. (Since our paths are simple, this condition entails that $P[x,y] = \mu_c[x,y]$.) A pair $(P,\col)$ formed by a path together with a good colouring is called \emph{well-coloured path}. 

Operator $\odot$ defined in~\cref{sec:nota} naturally extends to coloured paths.
Given a coloured path $(P,\col)$, we simply denote by $(P[x,y],\col)$ its restriction to a subpath $P[x,y]$ of $P$. 
Finally, we define for any path $P$ and any colour $1 \leq c \leq k$ the function $\monochr_c: E(P) \rightarrow \{c\}$. Hence all edges of the coloured path $(P, \monochr_c)$ have colour $c$. 

With these notations, any well-coloured $a$-$b$ path $(P,\col)$ with colours $(c_1,\dots, c_l)$ appearing in this order can be written as: $$(\mu_{c_1}[x_1,x_2], \monochr_{c_1}) \odot (\mu_{c_2}[x_2,x_3], \monochr_{c_2})\odot \ldots \odot (\mu_{c_l}[x_l,x_{l+1}], \monochr_{c_l})$$ for some vertices $a=x_1,x_2,x_3\ldots, x_l, x_{l+1}=b$. In full words, $P[x_i,x_{i+1}]$ are the monochromatic subpaths of $(P,\col)$, coloured $c_i$.

\begin{lemma}[Good colouring lemma]\label{le:goodcol}
For any pair of vertices $a$ and $b$ of $G$, there exists a well-coloured  $a$-$b$ path $(P,\col)$ such that $P$ is a shortest $a$-$b$ path. 

We will simply call $(P,\col)$  a \emph{well-coloured shortest $a$-$b$ path}.
\end{lemma}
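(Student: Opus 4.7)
The plan is to exhibit the well-coloured shortest $a$-$b$ path via a minimization argument. First, I would observe that every edge $e$ of $G$ satisfies $\Colours(e) \neq \emptyset$ since the base paths cover $E$, so any shortest $a$-$b$ path $P$ admits at least one (not necessarily good) colouring $\col$. Among all pairs $(P, \col)$ with $P$ a shortest $a$-$b$ path and $\col$ any colouring of $E(P)$, I would fix one minimizing the total number $l$ of maximal monochromatic subpaths of $(P,\col)$, and then show that this minimizer is forced to be well-coloured.

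For the contradiction step, suppose some colour $c$ labels two distinct maximal monochromatic segments of $(P,\col)$, say $P[x_i,x_{i+1}]$ and $P[x_j,x_{j+1}]$ with $j \geq i+2$ (the gap of at least two is automatic, since adjacent same-colour segments would merge into a single maximal segment). Both segments lie entirely in the base path $\mu_c$, so $x_i$ and $x_{j+1}$ both belong to $\mu_c$ and the subpath $\mu_c[x_i,x_{j+1}]$ is well-defined. Because $\mu_c$ is itself a shortest path, $|\mu_c[x_i,x_{j+1}]| = \dist(x_i, x_{j+1}) \leq |P[x_i,x_{j+1}]|$. I would then form the walk $W = P[a,x_i] \odot \mu_c[x_i,x_{j+1}] \odot P[x_{j+1},b]$ from $a$ to $b$, of length at most $|P|$, and recolour it by keeping $\col$ on the two outer pieces and assigning colour $c$ to every edge of the middle piece (valid since those edges lie in $\mu_c$). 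Provided $W$ is a simple path, it is a shortest $a$-$b$ path with at most $l - (j-i) \leq l - 2$ maximal monochromatic subpaths, contradicting minimality.

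The main obstacle is ruling out the possibility that $W$ is only a walk: $\mu_c[x_i,x_{j+1}]$ could revisit vertices of $P[a,x_i]$ or of $P[x_{j+1},b]$. I would handle this by contradiction: if some vertex appeared twice along $W$, then shortcutting between its two occurrences would produce a strictly shorter $a$-$b$ walk, from which one could extract an $a$-$b$ path of length less than $|P|$, contradicting that $P$ is a shortest $a$-$b$ path. Hence $W$ must be simple, and since $|W| \leq |P|$ with $P$ shortest, $|W| = |P|$. This completes the reduction of the number of monochromatic segments below $l$ and yields the required contradiction, proving that the chosen minimizer $(P,\col)$ is a well-coloured shortest $a$-$b$ path.
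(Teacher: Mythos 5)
Your proof is correct and takes essentially the same approach as the paper: pick a shortest $a$-$b$ path admitting a colouring with the fewest monochromatic segments, and when a colour $c$ recurs, replace the stretch between the two $c$-segments by the corresponding subpath of $\mu_c$, recoloured $c$, to strictly reduce the number of segments. Your final paragraph justifying that the resulting concatenation is a simple path (by shortcutting any repeated vertex to contradict minimality of $|P|$) just spells out a point the paper's proof asserts parenthetically.
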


\begin{proof}
Among all shortest $a$-$b$ paths, choose one that admits a colouring with a minimum number of monochromatic subpaths. Let $P$ be this path, and $\col$ the corresponding colouring. Assume by contradiction that the colouring is not good. Then there exist three edges $e_1 = \{y_1,z_1\}$, $e_2 = \{y_2,z_2\}$ and $e_3 = \{y_3,z_3\}$, appearing in this order, such that $\col(e_1) = \col(e_3) \neq \col (e_2)$. Assume w.l.o.g. that the vertices appear in the order $y_1, z_1, y_2, z_2, y_3, z_3$ from $a$ to $b$ (note that we may have $z_1 = y_2$ or $z_2=y_3$). 
Let $c = \col(e_1) = \col(e_3)$. Therefore $z_1$ and $y_3$ are on the same base path $\mu_c$. Let $P'$ be the path obtained from $P$ by replacing $P[z_1,y_3]$ by $\mu_c[z_1,y_3]$. First, $P'$ is no longer than $P$, since $\mu_c[z_1,y_3]$ is a shortest possible $z_1$-$y_3$ path of graph $G$ (in particular, $P'$ has no repeated vertices). Second, in $P'$ we can colour all edges of $P'[z_1,y_3]$ with colour $c$, and keep all other colours unchanged. Hence $P'$ has strictly fewer monochromatic subpaths than $P$ --- a contradiction.
\end{proof}

Let $(P,\col)$ be a well-coloured $a$-$b$ path, with colours $(c_1,\ldots,c_l)$ in this order. Recall that each monochromatic subpath $P[x_i,x_{i+1}]$ of $P$, of colour $c_i$, induces a sign on the corresponding base path $\mu_{c_i}$ (positive if $P[x_i,x_{i+1}]$ has the same direction as $\mu_{c_i}$, negative otherwise). Therefore, we can define the \emph{\WordColoursSigns{}} 
$\ColoursSigns(P,\col) = ((c_1,s_1),(c_2,s_2),\dots,(c_l,s_l))$ on the alphabet $\{1,\dots, k\} \times \{+,-\}$, corresponding to the colours and signs of the monochromatic subpaths of $P$, according to the ordering in which these subpaths appear from $a$ to $b$.

\subsection{Warm-up: Factorial  bound}\label{ss:factorial}

Observe that \WordColoursSigns s have at most $k$ letters on an alphabet of size $2k$. Therefore the number of different such words is upper bounded by a function of $k$:

\begin{lemma}\label{le:countwords}
The number of possible \WordColoursSigns s, over all 
well-coloured paths of $G$, is upper bounded by $h(k) = \sum_{l=1}^k 2^l \frac{k!}{(k-l)!}$.
\end{lemma}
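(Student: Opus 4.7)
The plan is a direct counting argument exploiting the definition of a good colouring. The key observation I would record first is that in any well-coloured path $(P,\col)$, the colour sequence $(c_1,\ldots,c_l)$ of its monochromatic subpaths consists of pairwise distinct values. Indeed, if some colour $c$ appeared in two positions $i<j$, then the edges of $P$ coloured $c$ would include $E(P[x_i,x_{i+1}])$ and $E(P[x_j,x_{j+1}])$, which by definition of the monochromatic decomposition are separated by at least one edge of another colour. This would contradict the requirement that all edges of $P$ of colour $c$ form a single connected subpath of $P$.

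Having established the injectivity of the colour sequence, I would then bound the length $l$: since colours come from $\{1,\ldots,k\}$ and all of them are distinct, we necessarily have $1\le l\le k$.

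Finally, I would count the number of possible \WordColoursSigns{} word $\ColoursSigns(P,\col)=((c_1,s_1),\ldots,(c_l,s_l))$ by summing over $l$. For a fixed length $l$, the number of ways to choose the colour sequence is the number of injective functions $\{1,\ldots,l\}\to\{1,\ldots,k\}$, namely $\frac{k!}{(k-l)!}$. Independently, each of the $l$ positions carries a sign in $\{+,-\}$, giving a factor of $2^l$. Summing over $l=1,\ldots,k$ yields the claimed bound
\[
h(k)=\sum_{l=1}^{k} 2^l\,\frac{k!}{(k-l)!}.
\]

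There is no real obstacle here; the statement is essentially a combinatorial bookkeeping exercise once one notices that goodness forces distinctness of the colour sequence. The only subtle point to state cleanly is the injectivity argument above, which relies on the fact that the $P[x_i,x_{i+1}]$ in the decomposition of $(P,\col)$ are by construction the maximal monochromatic subpaths, so two different indices $i\neq j$ must carry different colours.
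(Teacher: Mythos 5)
Your argument is correct and follows the paper's proof essentially verbatim: both observe that goodness forces the colour sequence to be a repetition-free word of length $l \le k$ over $\{1,\dots,k\}$, count $\frac{k!}{(k-l)!}$ such words, multiply by $2^l$ for the signs, and sum over $l$. The only difference is that you spell out the injectivity of the colour sequence more explicitly, which the paper leaves as an immediate consequence of the definition of a good colouring.
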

\begin{proof}
We claim that the number of \WordColoursSigns s of $l$ letters is upper bounded by $2^l \frac{k!}{(k-l)!}$. 
Observe that the colours form a word of length $l$, on an alphabet of size $k$, without repetition. The number of such words is $\frac{k!}{(k-l)!}$ (e.g., by choosing $l$ letters among the $k$ possible ones, and applying all possible permutations). Since each letter also has a sign in $\{+,-\}$, we multiply this quantity by $2^l$, and the conclusion follows by summing over all possible values of $l$.
\end{proof}

The following crucial lemma implies that, given a start vertex $a$, a distance $D$ and a \WordColoursSigns{} $\omega$, there is at most one vertex $b$ at distance $D$ from $a$, such that the well-coloured shortest $a$-$b$ path respects word $\omega$. This will allow to upper bound the number of vertices at distance $D$ from $a$.

\begin{lemma}[Colours-signs encoding]\label{le:colsign}
Consider two vertices $b$ and $c$ at the same distance from some vertex $a$ of $G$. Let $(P,\col)$ be a well-coloured shortest $a$-$b$ path and $(P',\col')$ be a well-coloured shortest $a$-$c$ path. 
If $\ColoursSigns(P,\col) = \ColoursSigns(P',\col')$, then $b=c$.
\end{lemma}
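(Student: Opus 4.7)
I will prove the lemma by induction on the length $l$ of the common word $\omega = ((c_1,s_1),\ldots,(c_l,s_l))$. The base case $l=1$ is immediate: each of $P,P'$ is a subpath of $\mu_{c_1}$ starting at $a$ in direction $s_1$, so the endpoint is the unique vertex of $\mu_{c_1}$ lying at distance $D := \dist(a,b) = \dist(a,c)$ from $a$ in that direction, whence $b=c$. For the inductive step, denote by $x_2,\ldots,x_l$ (respectively $y_2,\ldots,y_l$) the switching vertices of $P$ (respectively $P'$), write $\alpha_i = |\mu_{c_i}[x_i,x_{i+1}]|$ and $\beta_i = |\mu_{c_i}[y_i,y_{i+1}]|$, and set $x_{l+1}=b$, $y_{l+1}=c$.

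If $\alpha_1 = \beta_1$ then $x_2 = y_2$, and the suffixes $P[x_2,b]$ and $P'[y_2,c]$ are well-coloured shortest paths from the common vertex $x_2$, both of length $D-\alpha_1$, whose colours-signs words both equal $((c_2,s_2),\ldots,(c_l,s_l))$; the inductive hypothesis then yields $b=c$. Otherwise, assume without loss of generality $\alpha_1 < \beta_1$. Both $x_2$ and $y_2$ lie in $V(\mu_{c_1}) \cap V(\mu_{c_2})$, and since $\mu_{c_1}, \mu_{c_2}$ are shortest paths we have $|\mu_{c_2}[x_2,y_2]| = \dist(x_2,y_2) = \beta_1-\alpha_1$. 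Hence, in the direction $s_2$ of $\mu_{c_2}$, the vertex $y_2$ lies at signed distance either $+(\beta_1-\alpha_1)$ or $-(\beta_1-\alpha_1)$ from $x_2$.

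The central construction is the hybrid walk $P^\star := \mu_{c_1}[a,x_2] \odot \mu_{c_2}[x_2,y_3] \odot P'[y_3,c]$. If $y_2$ is reached from $x_2$ along direction $+s_2$ on $\mu_{c_2}$, then $\mu_{c_2}[x_2,y_3]$ has length $(\beta_1-\alpha_1)+\beta_2$ and follows direction $s_2$, so $|P^\star| = \alpha_1 + (\beta_1-\alpha_1+\beta_2) + \beta_3 + \cdots + \beta_l = D$. Since $\dist(a,c)=D$, the walk $P^\star$ must be simple---any repetition would yield, by shortcutting, a strictly shorter $a$-$c$ walk---hence it is a genuine shortest $a$-$c$ path. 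Its natural colouring is good and produces exactly the word $\omega$. As $P$ and $P^\star$ share the first monochromatic piece $\mu_{c_1}[a,x_2]$, their suffixes $P[x_2,b]$ and $P^\star[x_2,c]$ satisfy the hypotheses of the inductive claim with word length $l-1$ and common starting vertex $x_2$, so $b=c$.

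The only real obstacle is ruling out the opposite orientation, where $y_2$ is at signed distance $-(\beta_1-\alpha_1)$ from $x_2$ on $\mu_{c_2}$ along $s_2$. A direct length computation then gives $|\mu_{c_2}[x_2,y_3]| = |\beta_2-(\beta_1-\alpha_1)|$, whence $|P^\star| = D - 2\min(\beta_2,\beta_1-\alpha_1) < D$; thus $P^\star$ is an $a$-$c$ walk strictly shorter than $\dist(a,c)$, a contradiction. This closes the induction.
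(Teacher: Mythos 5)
Your proof is correct and follows essentially the same strategy as the paper: induction on the length of the colours-signs word, splitting the first monochromatic piece into the equal-length and unequal-length subcases, and constructing a hybrid shortest path by rerouting along $\mu_{c_2}$. The only cosmetic difference is that where the paper establishes the correct orientation of $\mu_{c_2}[x_2,y_3]$ directly (via the fact that $a$, $x_2$, $y_2$, $y_3$ occur in this order on the shortest path $P'$, forcing the corresponding order on $\mu_{c_2}$), you instead rule out the wrong orientation by a length computation showing the hybrid walk would be shorter than $\dist(a,c)$; both arguments are valid and of comparable brevity.
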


\begin{proof}
We proceed by induction on the number of letters of the word $\ColoursSigns(P,\col)$. 
Let us denote it by $\omega=((c_1,s_1),(c_2,s_2),\dots,(c_l,s_l))$. 

Let $P[a,x_2]$ (resp. $P'[a,x'_2]$) be the maximal subpath of $P$ (resp. $P'$) of colour $c_1$ starting from $a$. 
Assume w.l.o.g. that $P[a,x_2]$ is at least as long as $P'[a,x'_2]$. Since both are subpaths of $\mu_{c_1}$, starting from $a$ and having the same sign $s_1$ w.r.t. $\mu_{c_1}$, we actually have that $P'[a,x'_2]$ is contained in $P[a,x_2]$, in particular $x'_2$ is between $a$ and $x_2$ in $P$ and in $\mu_{c_1}$.

Observe that, if word $\omega$ has only one letter, $P[a,x_2] = P$ and $P'[a,x'_2] = P'$, thus they are all of the same length. Since they are of the same sign w.r.t. $\mu_{c_1}$, this implies that $x_2 = x'_2=b=c$,  which proves the base case of our induction. 

Assume now that $\omega$ has $l \geq 2$ letters and that the lemma is true for words of length $l-1$.

Consider first the case when $P[a,x_2]$ and $P'[a,x'_2]$ have the same length. Then $x_2 = x'_2$ is also the first vertex of 
the subpaths of colour $c_2$ of both $P$ and $P'$. Then $(P[x'_2,b],\col)$ and $(P'[x'_2,c],\col')$ are well-coloured shortest paths of the same length, and have the same  \WordColoursSigns{} $((c_2,s_2),\dots,(c_l,s_l))$, with $l-1$ letters. Hence the property follows by the induction hypothesis.

We now handle the second and last case, when  $P[a,x_2]$ is strictly longer than $P'[a,x'_2]$. 
Let $x_3$ be the last vertex of the subpath coloured $c_2$ in $(P,\col)$. In particular, $x'_2,x_2$ and $x_3$ are all vertices of $\mu_{c_2}$. 
Let us make an easy but crucial observation: on $\mu_{c_2}$, vertex $x_2$ is between $x'_2$ and $x_3$.
To prove this claim, note that in path $P$, vertices $a$, $x'_2$ and $x_2$ appear in this order (as observed in the beginning of the proof), and by construction $x_2$ appears between $a$ and $x_3$. Therefore $a,x'_2,x_2,x_3$ appear in this order on $P$, which is a shortest path. Hence $dist(x'_2,x_3) = dist(x'_2,x_2)+dist(x_2,x_3)$. Since the three vertices $x'_2,x_2,x_3$ are all on the shortest path $\mu_{c_2}$, they must appear in this order on it. Consequently, $\mu_{c_2}[x'_2,x_3]$ induces the same sign $s_2$ on $\mu_{c_2}$ as $P[x_2,x_3] = \mu_{c_2}[x_2,x_3]$. 
In particular, in path: $$(P[x'_2,b],\col) = (P[x'_2,x_2],\monochr_{c_1}) \odot (P[x_2,x_3], \monochr_{c_2}) \odot (P[x_3,b],\col),$$ we can replace the first subpath $P[x'_2,x_2]$ coloured $c_1$ by $\mu_{c_2}[x'_2,x_2]$, coloured $c_2$, without changing the total length.
We obtain the well-coloured shortest $x'_2$-$b$ path
 $$(\tilde{P}[x'_2,b],\tilde{col}) = (\mu_{c_2}[x'_2,x_3],\monochr_{c_2})\odot (P[x_3,b],\col).$$
Its \WordColoursSigns{}  
is $((c_2,s_2),\ldots,(c_l,s_l))$, the same as for the shortest $x'_2$-$c$ path $(P'[x'_2,c],\col')$. Moreover, the two paths have the same length, $|P| - |P[a,x'_2]|$, hence by the induction hypothesis we have $b=c$, which proves our lemma.
\end{proof}

\begin{corollary}\label{co:distboundE}
For any vertex $a$ of $G$ and any integer $D$, there are at most $h(k) = \sum_{l=1}^k 2^l \frac{k!}{(k-l)!}$ vertices at distance exactly $D$ from $a$.
\end{corollary}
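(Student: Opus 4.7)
The plan is to combine the three main lemmas of this section in the obvious way: the good colouring lemma guarantees that each vertex at distance $D$ from $a$ can be reached by a well-coloured shortest path and hence has an associated \WordColoursSigns{}, the colours-signs encoding lemma shows that this assignment is injective on vertices at distance $D$ from $a$, and the counting lemma bounds the total number of possible words by $h(k)$.

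More concretely, first I would fix the vertex $a$ and the integer $D$, and let $S_D$ denote the set of vertices at distance exactly $D$ from $a$. For each $b \in S_D$, I would invoke \cref{le:goodcol} to choose a well-coloured shortest $a$-$b$ path $(P_b,\col_b)$, and define $\Phi(b) = \ColoursSigns(P_b,\col_b)$. The map $\Phi$ is then a map from $S_D$ into the (finite) set of \WordColoursSigns s of well-coloured paths of $G$.

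Next I would argue that $\Phi$ is injective: if $b,c \in S_D$ and $\Phi(b)=\Phi(c)$, then $(P_b,\col_b)$ and $(P_c,\col_c)$ are both well-coloured shortest paths starting at $a$, of the same length $D$, with the same \WordColoursSigns{}. Applying \cref{le:colsign} immediately yields $b=c$. Consequently $|S_D| \leq |\operatorname{Im}(\Phi)|$, which is bounded by the total number of \WordColoursSigns s.

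Finally, \cref{le:countwords} gives the bound $|S_D| \leq h(k) = \sum_{l=1}^k 2^l \frac{k!}{(k-l)!}$, which is exactly the desired inequality. There is no real obstacle here: the entire technical content sits in \cref{le:colsign}, and this corollary is a short bookkeeping step assembling the previous lemmas; the only thing worth stating explicitly is that the injectivity argument uses the hypothesis that $b$ and $c$ are both at distance exactly $D$ from $a$, which is precisely what allows \cref{le:colsign} to be applied.
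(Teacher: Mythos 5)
Your proof is correct and follows exactly the same route as the paper: choose a well-coloured shortest path to each vertex at distance $D$ (\cref{le:goodcol}), use \cref{le:colsign} to show the colours-signs word determines the endpoint among such vertices, and bound the number of words by \cref{le:countwords}. You merely spell out the injectivity argument more explicitly than the paper does.
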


\begin{proof}
For any fixed vertex $a$ and fixed integer $D$, thanks to~\cref{le:colsign} the number of vertices $x$ at distance exactly $D$ from $a$ is upper-bounded by the number of \WordColoursSigns s, which is in turn upper bounded by $h(k)$ by~\cref{le:countwords}.
\end{proof}

\cref{co:distboundE} together with~\cref{le:pwBFS} entail a weaker version of~\cref{thm:twGak}: the pathwidth of graphs edge-coverable by $k$ shortest paths is at most $2h(k)-1$, providing a factorial upper bound. 

\subsection{Single exponential bound}
\label{ss:improving}

In this section, we will show that the number of vertices at distance $D$ from a vertex $a$ can actually be bounded by $O(3^k)$.
In the previous section we showed that any two shortest well-coloured paths of length $D$ starting from $a$ with the same colours-signs words lead to the same vertex. We observe that paths with different colours-signs word may also lead to the same vertex, see~\cref{fig:intuition_colarbo}. In this section, we generalize this idea to a set of shortest paths from $a$ to all vertices at distance $D$ from $a$. 

Before getting into details, let us try to give an informal description of our construction (see~\cref{fig:colarboration}). Let $\PP$ be a family of well-coloured shortest paths starting from vertex $a$, with different endpoints. Initially this family reaches all vertices at distance $D$ from $a$. We transform it into another family with the same properties, preserving the endpoints. 
Let $x_n$ be the furthest vertex from $a$, contained on all paths of $\PP$, such that these paths share the same coloured subpath from $a$ to $x_n$. We focus on the colours $\CC$ appearing on the edges right after $x_n$. Firstly, we apply the following transformation. For each path $(P,\col) \in \PP$, we consider the furthest edge having its colour $c$ in $\CC$ (e.g., in~\cref{fig:colarboration}, for path $P_3$ this edge is coloured red), and we replace the whole subpath of $P$ from $x_n$ to this edge with a subpath of $\mu_c$, of colour $c$. This new family of paths is then split into subfamilies according to the colour and sign of the first edge after $x_n$, and we branch on each subfamily. We eventually prove that this branching process corresponds to a tree (that we call \emph{\colarbo{} tree}), in the sense that the output paths are bijectively mapped on the leaves of the tree. Moreover we will upper bound the number of leaves of the tree, and thus the number of paths, to $O(3^k)$.

We firstly define \colarbo{} trees and show in~\cref{lem:colarbo} that they have $O(3^k)$ leaves. Then we formally describe the above branching process in~\cref{lem:boundak}, deducing the same upper bound on the number of vertices at distance $D$ from $a$. 

\begin{figure}
    \centering
    \includegraphics[scale=1.7]{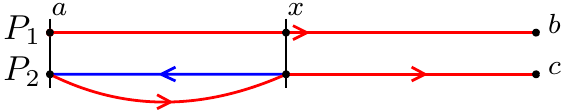}
    \caption{$(P_1,\col_1)$ and $(P_2,\col_2)$ are well-coloured shortest paths of same length. The vertex $x$ is shared by the two paths. One can replace $\mu_{\textcolor{blue}{blue}}[a,x]$ in $P_2$ by $\mu_{\textcolor{red}{red}}[a,x]$, this proves that $b=c$.}
    \label{fig:intuition_colarbo}
\end{figure}


\medskip

Let us define a \emph{rooted labelled tree} as a pair $(T,\lab)$ where $T$ is a rooted tree and $\lab : E(T) \rightarrow \CS$ is an edge labelling function with $\CS = \{1,\dots,k\}\times\{+,-\}$. For a node $n$ of $T$ let $T_n$ be the subtree rooted in $n$.
For a node $n$ of $T$ and its children $n_1, \dots, n_t$ let $\LL(n) = \{\lab(\{n,n_1\}),\dots,\lab(\{n,n_t\})\}$  and $\LS(n) = \{(c,s),(c,\overline{s}) \mid (c,s)\in \LL(n)\}$, where $\overline{s}$ is the opposite of sign $s$.
\begin{definition}
\label{def:colarbo}
A rooted labelled tree $(T,\lab)$ is called a \emph{\colarbo{} tree} 
if the following properties hold:
\begin{enumerate}[label=(\roman*)] 
   \item \label{it:colarbo_path} in every path from the root of $T$ to one of its leaves, the edges with the same label form a connected subpath.
\end{enumerate}

And for every node $n$ of $T$ with children $n_1, \dots, n_t$:
\begin{enumerate}[resume, label=(\roman*)] 
    \item \label{it:colarbo_distinct}
    the labels on the edges $\{n,n_i\}, i\in\{1,\dots,t\}$ are pairwise distinct, 
    \item \label{it:colarbo_subtree} for every $i\in\{1,\dots,t\}$, $T_{n_i}$ does not contain any edge with labels in $\LS(n)\setminus\{\lab(\{n,n_i\})\}$,
   \end{enumerate}
\end{definition}

We now give a combinatorial bound on the number of leaves contained in a \colarbo{} tree. We would like to mention that in a general setting where the labelling function $\lab$ is defined over some alphabet $\Sigma$ and the second condition is replaced by ''$T_{n_i}$ does not contain any edge with labels in $\LL(n)\setminus\{\lab(\{n,n_i\})\}$'', one can achieve an $O(2^{|\Sigma|})$ bound, thus giving $O(4^k)$ in our case (observe that we simply replaced, in the second condition, $\LS$ by $\LL$). However by exploiting the specificity of the alphabet $\CS$ we actually have the following.

\begin{lemma}
\label{lem:colarbo}
A \colarbo{} tree $(T,\lab)$ contains $O(3^k)$ leaves.
\end{lemma}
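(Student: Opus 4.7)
The plan is to exhibit an injection $\phi$ from the leaves of $T$ into $S=\{+,-,0\}^{\{1,\dots,k\}}$, where the symbol $0$ stands for ``colour unused''. Since $|S|=3^k$, this immediately yields the claimed bound on the number of leaves.

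First I would establish an auxiliary fact: along any root-to-leaf path $\pi$ of $T$, each colour $c\in\{1,\dots,k\}$ is used with at most one sign. Indeed, if edges of both labels $(c,s)$ and $(c,\overline{s})$ appeared on $\pi$, then by \ref{it:colarbo_path} they would occupy two disjoint connected sub-blocks of $\pi$; letting $\{n,n_i\}$ be the $(c,s)$-edge of $\pi$ closest to the $(c,\overline{s})$-block, that block would lie entirely in $T_{n_i}$. But $(c,s)\in\LL(n)$ forces $(c,\overline{s})\in\LS(n)\setminus\{(c,s)\}$, which by \ref{it:colarbo_subtree} is forbidden in $T_{n_i}$---a contradiction. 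I can then unambiguously define $\phi(\ell)(c)=s$ if some edge of label $(c,s)$ appears on the root-to-$\ell$ path, and $\phi(\ell)(c)=0$ otherwise.

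For injectivity, take distinct leaves $\ell_1,\ell_2$, let $v$ be their least common ancestor, and let $v_1,v_2$ be its children on the paths to $\ell_1,\ell_2$, with labels $(c_1,s_1)$ and $(c_2,s_2)$ that are distinct by \ref{it:colarbo_distinct}. A crucial observation, assuming $c_1\neq c_2$, is that at most one of these colours can also appear on the common prefix from the root to $v$: if $c_j$ did, the auxiliary fact would force its sign there to be $s_j$, and \ref{it:colarbo_path} together with the edge $\{v,v_j\}$ labelled $(c_j,s_j)$ would require the $(c_j,s_j)$-block to include the last edge of the prefix, and a single edge cannot carry two different labels. The rest is a short case split. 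If $c_1=c_2$ then $s_1\neq s_2$, and the auxiliary fact yields $\phi(\ell_1)(c_1)=s_1\neq s_2=\phi(\ell_2)(c_1)$. Otherwise $c_1\neq c_2$ and, without loss of generality, $c_1$ does not appear on the prefix; since $(c_1,+)$ and $(c_1,-)$ both lie in $\LS(v)\setminus\{(c_2,s_2)\}$, condition \ref{it:colarbo_subtree} at $v$ forbids colour $c_1$ throughout $T_{v_2}$, so $\phi(\ell_2)(c_1)=0\neq s_1=\phi(\ell_1)(c_1)$.

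Injectivity of $\phi$ yields the bound $3^k=O(3^k)$ on the number of leaves of $T$. I anticipate the most delicate step to be the prefix observation, which bundles \ref{it:colarbo_path}, \ref{it:colarbo_distinct} and the auxiliary fact; the remainder is straightforward bookkeeping. This framing also highlights why the bound improves from the $O(4^k)$ mentioned just before the lemma to $O(3^k)$: the appearance of $\LS$ (rather than $\LL$) in \ref{it:colarbo_subtree} is precisely what guarantees the one-sign-per-colour property along each path, which in turn brings the per-coordinate alphabet of $\phi$ down from four to three values.
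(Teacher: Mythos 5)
Your proof is correct, and it takes a genuinely different route from the paper's. The paper argues via a recursive ``budget'' $\B(n)$ of available labels, proving (Claim~\ref{claim:budget_colarbo}) that each branching node of degree $t\geq 2$ decreases the budget by at least $\max\{2,t-1\}$ in every child, and then invoking the branching-vector machinery of Fomin and Kratsch to conclude that the recurrence $F(b)\leq t\cdot F(b-\max\{2,t-1\})$ has growth $O(3^{b/2})$, i.e.\ $O(3^k)$ for budget $b=2k$. You instead build an explicit injection from the leaves into $\{+,-,0\}^k$: the one-sign-per-colour auxiliary fact (which is exactly what the strengthening from $\LL$ to $\LS$ in condition~\ref{it:colarbo_subtree} buys) makes the map well defined, and the prefix observation at the least common ancestor handles injectivity. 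Your argument has two advantages worth noting: it avoids the branching-analysis black box and is entirely self-contained, and it yields the sharper bound of \emph{exactly} $3^k$ leaves rather than $O(3^k)$. The paper's approach is closer to the algorithmic intuition behind the tree construction, but for proving this combinatorial lemma your injection is the cleaner argument.

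One small presentational remark: in the auxiliary fact you should say explicitly that the argument is symmetric in $s$ and $\overline{s}$ (i.e.\ you may assume the $(c,s)$-block lies closer to the root), so that ``the $(c,s)$-edge closest to the $(c,\overline{s})$-block'' is indeed the deepest edge of the $(c,s)$-block and the $(c,\overline{s})$-block lies entirely in $T_{n_i}$. This is how you implicitly use it, but spelling it out would make the application of condition~\ref{it:colarbo_subtree} unambiguous.
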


\begin{proof}

Let $(T,\lab)$ be a \colarbo{} tree, with the labelling function $\lab : E(T) \rightarrow \CS$. 
Let $\B(n)$ be the set of labels that have not been forbidden in $T_n$ by condition~\ref{it:colarbo_path} or condition~\ref{it:colarbo_subtree} of~\cref{def:colarbo} applied to some ancestor of $n$ in the tree $T$. In particular, if $n$ is the root, then $\B(n) = \CS$, and at any node $n$, $\B(n)$ is a superset of all labels of $E(T_n)$.
If $n$ has exactly one child $n'$, by definition $\B(n') \subseteq \B(n)$ so $|\B(n')| \leq |\B(n)|$. 


\begin{claim}\label{claim:budget_colarbo}
    Let $n$ be a node of $T$ with $t\geq 2$ children $n_1, \dots, n_t$. If $n$ is the root then $|\B(n_i)| \leq |\B(n)| - (t - 1)$, else $|\B(n_i)| \leq |\B(n)| - \max\{2,t-1\}$. 
\end{claim}
\begin{proofclaim}    
Since $T$ is a \colarbo{} tree, by condition~\ref{it:colarbo_distinct} the labels of the edges $\{n,n_j\}$, $1 \leq j \leq t$ are pairwise distinct. In particular, the $t-1$ distinct labels of edges $\{n,n_j\}$, with $j \neq i$, appear in $\B(n)$ but not in $\B(n_i)$, by condition~\ref{it:colarbo_subtree}. Therefore $|\B(n_i)| \leq |\B(n)|-(t-1)$.

It remains to prove that if $n$ is not the root and has exactly two children $n_1,n_2$, then $|\B(n_i)| \leq |\B(n)|-2$, for $i \in \{1,2\}$. Let $p$ be the parent node of $n$ in $T$, we denote $(c,s) = \lab(\{p,n\})$ and $(c_i,s_i) = \lab(\{n,n_i\})$ for $i \in \{1,2\}$ (see~\cref{fig:notroot}). 
We claim that if $c_i \neq c$ then both $(c_i,s_i)$ and $(c_i,\overline{s_i})$ are in $\B(n)$. Indeed, in such a case condition~\ref{it:colarbo_path} implies that label $(c_i,s_i)$ does not appear on the path from $n$ to the root of $T$. By condition~\ref{it:colarbo_subtree}, nor does label $(c_i,\overline{s_i})$: this would forbid label $(c_i,s_i)$ in the tree below, in particular in $T_n$. Hence labels $(c_i,s_i)$ and $(c_i,\overline{s_i})$ do not appear on the path from $n$ to the root of $T$. By condition~\ref{it:colarbo_subtree}, for any edge $\{n',n''\}$ of $T$ such that $n'$ is an ancestor of $n$ and $n''$ is not an ancestor of $n$ (see~\cref{fig:notroot}), $c_i$ cannot appear in the label $\lab(\{n',n''\})$ since otherwise $(c_i,s_i)$ and $(c_i,\overline{s_i})$ would be forbidden in the subtree of $T_{n'}$ containing $n$ (which does not contains $n''$). In particular $(c_i,s_i)$ could not be used as label of $\{n,n_i\}$. This means that colour $c_i$ does not appear on edges incident to the path from $n$ to the root of $T$. Altogether both $(c_i,s_i)$ and $ (c_i,\overline{s_i})$ are in $\B(n)$ since they could not be forbidden by conditions~\ref{it:colarbo_path} or~\ref{it:colarbo_subtree}.

\begin{figure}[h]
    \centering
    \includegraphics[scale=1.]{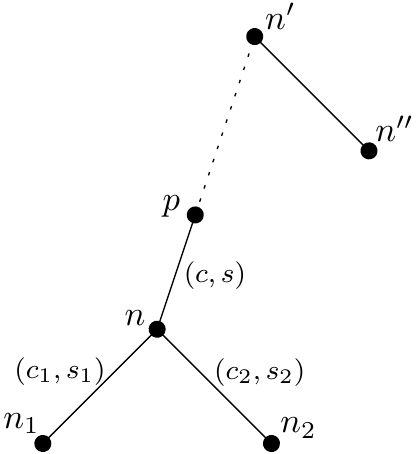}
    \caption{Illustration of the configuration one obtains when $n$ is not the root, has 
    at least two children $n_1$ and $n_2$ and parent $p$. Node $n'$ is an ancestor of $n$ 
    while $n''$ is not.}
    \label{fig:notroot}
\end{figure}

Assume now w.l.o.g. that $i=1$. 
If $c_1 = c$ then we claim that $c_2 \neq c$. Indeed by condition~\ref{it:colarbo_distinct}, if $c_1 = c_2 = c$ then signs $s_1$ and $s_2$ are opposite, so one is the opposite $\overline{s}$ of $s$, contradicting the fact that label $(c,\overline{s})$ is in $\LS(p)$ and has been forbidden in $T_n$ by condition~\ref{it:colarbo_subtree}. Therefore $(c_2,s_2)$ and $(c_2,\overline{s_2})$ are in $\B(n)$ and by condition~\ref{it:colarbo_subtree} are forbidden in $T_{n_1}$ and thus $|\B(n_1)| \leq |\B(n)|-2$. 

If  $c_1 \neq c$, then $(c,s)$ is not in $\B(n_1)$ by condition~\ref{it:colarbo_path}  but is in $\B(n)$  and $(c_1,\overline{s_1})$ is not in $\B(n_1)$ by condition~\ref{it:colarbo_subtree} but is in $\B(n)$, hence $|\B(n_1)| \leq |\B(n)|-2$.
\end{proofclaim}

We now prove that $T$ has $O(3^k)$ leaves. For a node $n$ of the tree, we see $|\B(n)|$ as the \emph{budget} of the tree $T_n$ and we estimate the number of leaves w.r.t. this budget. We remind that the budget of the root is $2k$.
Let $F(b)$ be the maximum number of leaves in any \colarbo{} tree with budget $b$ at its root. Observe that if $b = 1$ then $F(b) = 1$ and for $b\leq b', F(b)\leq F(b')$.
Let $n$ be a node of $T$, then $F(|\B(n)|)$ is the maximal number of leaves of $T_n$. Observe that if $n$ has exactly one child, i.e. it does not branch, then $F(|\B(n)|) = F(|\B(n_1)|)$ since $\B(n_1) \subseteq \B(n)$. Suppose that $n$ has $t\geq 2$ children $n_1,\dots n_t$. If $n$ is the root then, by~\cref{claim:budget_colarbo}: 

$$F(|\B(n)|) \leq F(|\B(n_1)|) + \dots + F(|\B(n_t)|) \leq t \cdot F(|\B(n)| - (t - 1)),$$ else if $n$ is not the root, $$F(|\B(n)|) \leq t \cdot F(|\B(n)| - max\{2,t-1\}).$$

At the root $r$ of the tree, since $\B(r) = \CS$, we have $|\B(r)| = 2k$. We need to prove that $F(2k) = O(3^k)$. 
Let us recall some standard techniques for the analysis on the growth of functions described by linear equations (or linear inequalities). The reader can refer to the book of Fomin and Kratsch~\cite[Chapter 2]{FominK10} for detailed explanations. Consider a function $C$ on positive integers. Assume there are positive integers $\beta_1, \beta_2,\dots,\beta_q$ such that
\begin{equation*}
  C(b) \leq C(b - \beta_1) + C(b - \beta_2) + \dots + C(b-\beta_q),  
\end{equation*}
for all values $b$ (or at least for all $b$ larger than a threshold $b_0$). Then $C(b) = O(\alpha^b)$, where $\alpha$ is the (unique) positive real root of equation

\begin{equation*}
  x^b - x^{b - \beta_1} - x^{b - \beta_2} - \dots - x^{b-\beta_q} = 0.  
\end{equation*}

In~\cite{FominK10}, $(\beta_1, \beta_2,\dots,\beta_q)$ is called a branching vector, and $\alpha$ is the branching factor of this vector. 
Now in our case, for function $F$, we know that for each value $b$ there exists some $t \geq 3$ such that $F(b) \leq t \cdot F(b-(t-1))$, and also (except at the root, if the root has at most two children), $F(b) \leq 2 \cdot F(b-2)$ or $F(b) \leq F(b-1)$. Each of the inequalities leads to a different branching factor: $\alpha_t = \sqrt[t-1]{t}$ for the first equation, and $\alpha_2 = \sqrt{2}$, $\alpha_1= 1$ for the latter. Also according to~\cite{FominK10}, when a function $F$ satisfies one among a family of linear inequalities, the growth of $F$ is $F(b) = O(\alpha^b)$, where $\alpha$ is an upper bound on all branching factors. In our case, observe that the maximum value of $\alpha_1 = 1, \alpha_2 = \sqrt{2}$ and $\alpha_t = \sqrt[t-1]{t}, t\geq 3$ is attained for $t=3$ with $\alpha_3 = \sqrt{3}$. Therefore $F(b) = O(3^{b/2})$. Observe that this also holds if the root has one or two children. Indeed for each child of the root, the number of leaves in the corresponding subtree is $O(3^{b/2})$, and since there are at most two such children, the total number of leaves is  $O(3^{b/2})$.

Since at the root $b = 2k$, we conclude that our \colarbo{} tree has $O(3^k)$ leaves.
\end{proof}

We now prove formally the high-level description given before~\cref{lem:colarbo}. 
In a slight abuse of notation, for a coloured path $(P,\col)$ we define $\ColoursP(P,\col)$ as the set of colours that appear on  $(P,\col)$, \ie{} $\ColoursP(P,\col) = \{\col(e) \mid e \in E(P)\}$. 
For a set of coloured paths $\PP$, $\ColoursS(\PP) = \bigcup_{(P,\col) \in \PP} \ColoursP(P,\col)$.
For a coloured $a$-$b$ path $(P,\col)$ and a vertex $x$ of $P$, we let $P[x:] = P[x,b]$ and for a set of coloured paths $\PP$ that share vertex $x$ we let $\PP[x:]= \bigcup_{(P,\col) \in \PP} (P[x:],\col)$.

\begin{figure}[h]
    \centering
    \includegraphics[scale=1.5]{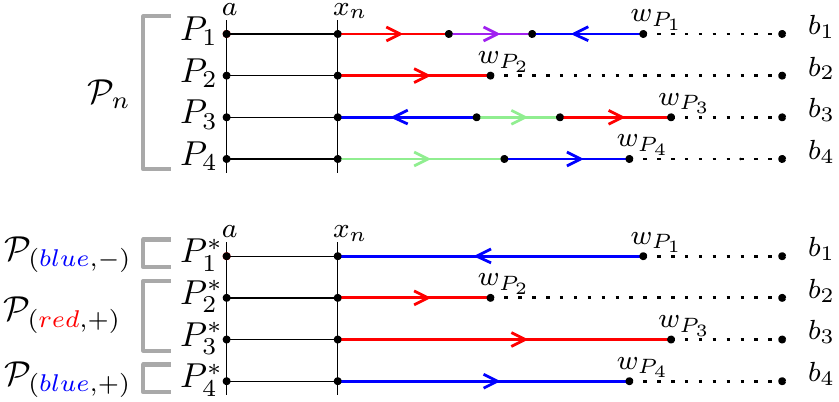}
    \caption{Example of the construction in the proof of~\cref{lem:boundak}
    for a set of paths $\PP_n$. In this example, all paths share the same subpath $(P_1[a,x_n],\col)$, $\CC_n =\{\textcolor{red}{red}, \textcolor{blue}{blue},\textcolor{mygreen}{green}\}$ and there are no edges with colour in $\CC_n$ in the path $(P_i[w_{P_i}:],\col_i)$.    
    The sets $\PP_{(\textcolor{red}{red},-)}, \PP_{(\textcolor{mygreen}{green},+)}, \PP_{(\textcolor{mygreen}{green},-)}$ are empty.}
    \label{fig:colarboration} 
\end{figure}

\begin{lemma}
    For any vertex $a$ of $V(G)$ and any integer $D$, there are  $O(3^k)$ vertices at distance exactly $D$ from $a$.
    \label{lem:boundak}
\end{lemma}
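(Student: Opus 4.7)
\noindent\emph{Proof plan.}
The plan is to encode the well-coloured shortest paths from $a$ to the vertices at distance $D$ as leaves of a \colarbo{} tree, and then invoke~\cref{lem:colarbo}.

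For each vertex $v$ at distance exactly $D$ from $a$, I would first select, via~\cref{le:goodcol}, a well-coloured shortest $a$-$v$ path $(P_v,\col_v)$, and collect them into a family $\PP$. I would then build recursively a rooted labelled tree $T$ with labels in $\CS$. Each node $n$ is to carry a subfamily $\PP_n\subseteq\PP$ and a vertex $x_n$ such that every path of $\PP_n$ passes through $x_n$ and they share the same coloured prefix from $a$ to $x_n$; the root has $\PP_r=\PP$ and $x_r=a$. A node $n$ with $\PP_n=\{(P,\col)\}$ and $P$ ending at $x_n$ becomes a leaf.

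At any other node $n$, let $\CC_n$ be the set of colours of the first edge of $P[x_n:]$ as $(P,\col)$ ranges over $\PP_n$. For each $(P,\col)\in\PP_n$, let $e_P$ be the \emph{furthest} edge of $P[x_n:]$ whose colour lies in $\CC_n$, of colour $c_P$ with endpoint $w_P$ farther from $a$; replace $P[x_n,w_P]$ by $\mu_{c_P}[x_n,w_P]$, coloured monochromatically with $c_P$. By a swap argument identical to the one in the proof of~\cref{le:colsign}, this preserves length, endpoint and well-colouring of the path. The transformed path then starts, right after $x_n$, with a monochromatic block of some label $(c,s)\in\CS$; partition $\PP_n$ along this label and create, for each non-empty class, a child of $n$ with edge label $(c,s)$ and with its $x$-vertex being the furthest common vertex shared by all paths in the class.

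The key step is to verify that $T$ is a \colarbo{} tree. Condition~\ref{it:colarbo_distinct} is built into the construction, and condition~\ref{it:colarbo_path} follows from condition~\ref{it:colarbo_subtree}, whose proof relies on the following observation: since $e_P$ is the \emph{furthest} edge of $P[x_n:]$ with colour in $\CC_n$, the suffix $P[w_P:]$ has no edge of colour in $\CC_n$; hence no label $(c,\pm)$ with $c\in\CC_n$ can appear anywhere in any subtree $T_{n_i}$, which is strictly stronger than what~\ref{it:colarbo_subtree} requires. A short injectivity argument, similar in spirit to~\cref{le:colsign}, shows that two paths of $\PP$ with different endpoints must be separated at some branching step and therefore reach distinct leaves, so the vertices at distance $D$ from $a$ inject into the leaves of $T$; the bound~\cref{lem:colarbo} then finishes the proof. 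The main obstacle will be to check rigorously that the merging transformation, performed simultaneously for every $P\in\PP_n$ and iterated throughout the recursion, leaves every path a valid well-coloured shortest $a$-$v$ path, and that the furthest-edge choice really enforces the strong forbidden-label property underlying the \colarbo{}-tree axioms; the rest is combinatorial bookkeeping.
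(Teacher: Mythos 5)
Your construction coincides with the one in the paper: you build a branching tree rooted at $a$, re-routing each path through $\mu_{c_P}[x_n,w_P]$ for the farthest edge in $\CC_n$, and then invoke \cref{lem:colarbo}. However, the verification that the resulting tree is a \colarbo{} tree is where your argument has two genuine gaps.

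First, the claim that because $P[w_P:]$ has no edge of colour in $\CC_n$, ``no label $(c,\pm)$ with $c\in\CC_n$ can appear anywhere in any subtree $T_{n_i}$, which is strictly stronger than what~\ref{it:colarbo_subtree} requires'' is not correct. What the furthest-edge choice gives you is that the colours $\CC_n\setminus\{c_i\}$ disappear from the suffix of every path in $\PP_{n_i}$; the colour $c_i$ itself may persist in $P^*[x_{n_i}:]$ (namely on the segment $\mu_{c_i}[x_{n_i},w_P]$ when $x_{n_i}$ falls strictly inside the re-routed monochromatic block), so the label $(c_i,s_i)$ can legitimately appear on edges inside $T_{n_i}$. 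Condition~\ref{it:colarbo_subtree} nonetheless requires excluding $(c_i,\overline{s_i})$ as well, and this is not implied by the colour argument; it needs a separate argument using well-colouring: if some edge of $T_{n_i}$ were labelled $(c_i,\overline{s_i})$, then on a path $(P',\col')$ living at that node the colour $c_i$ would appear with sign $s_i$ just after $x_n$ and again later with sign $\overline{s_i}$, contradicting the fact that the subpath of $P'$ coloured $c_i$ is a single directed subpath of $\mu_{c_i}$.

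Second, condition~\ref{it:colarbo_path} does not follow from condition~\ref{it:colarbo_subtree}. As an abstract counterexample, take a path $r$--$n_1$--$n_2$--$n_3$ with edge labels $(1,+)$, $(2,+)$, $(1,+)$ in this order: every application of~\ref{it:colarbo_subtree} is vacuously satisfied (each internal node has a single child), yet label $(1,+)$ does not form a connected subpath. In the paper, \ref{it:colarbo_path} is proved directly from the construction, using the facts that all paths in $\PP_{n'}$ for $n'$ in the subtree of $n$ share the coloured prefix up to $x_n$, and that the edge label $\lab(\{n,n'\})$ is the first letter of the colours-signs word after $x_n$; again well-colouring of the paths is the key ingredient. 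You would also need to argue that the recursion terminates (the paper does this by showing $\dist(a,x_n)<\dist(a,x_{n_i})$ whenever $n_i$ is a non-leaf child), otherwise the injection of the $b_i$ into leaves is not established.
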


\begin{proof}
Let $\{b_1\dots, b_\ell\}$ be the set of vertices at distance $D$ from $a$ in $G$ and $\PP$ denote a set $\{(P_1,\col_1),\dots, (P_\ell,\col_\ell)\}$ of well-coloured shortest $a$-$b_i$ paths. 
We aim to construct a \colarbo{} tree $(T, \lab)$ such that each node $n$ of $T$ is associated with a set of paths $\PP_n$. Moreover for each vertex $b_i$ there is a leaf $l$ of $T$ such that $\PP_l$ contains a well-coloured shortest $a$-$b_i$ path and for each leaf $l$, $\PP_l$ contains exactly one path.

We will construct $(T,\lab)$ recursively starting from the root $r$ of $T$. Initially $\PP_r = \PP$. For every node $n$ of $T$, $\PP_n$ will be a set of well-coloured shortest paths from $a$ to a subset of vertices of $\{b_1\dots, b_\ell\}$.
Let $n$ be an already computed node of $T$. If $|\PP_n| = 1$ we do nothing:
$n$ is a leaf of our tree and no recursive step is performed at $n$.
If $|\PP_n| > 1$, we will grow the tree from $n$. We construct the sets of paths that will be associated with the children of $n$ as follows:



\begin{enumerate}
    \item Let $x_n$ be the vertex at maximum distance from $a$ such that for every pair of coloured paths $(P,\col),(P',\col') \in \PP_n$, $(P[a,x_n],\col) = (P'[a,x_n],\col')$. This vertex is guaranteed to exist, notice that $x_n$ can be the vertex $a$. 
    \item Let $\CC_n = \{\col(x_n,y_P) \mid (P, \col) \in \PP_n\}$ where $\{x_n,y_P\}$ denotes the first edge of $P[x_n:]$. 
    \item \label{it:constr_path} For each $(P,\col) \in \PP_n$, let $\{z_P,w_P\}$ be the edge that is farthest from $x_n$ on $P[x_n:]$ whose colour is in $\CC_n$, and let $c_P = \col(\{z_P,w_P\})$. Construct the path $$(P^*,\col^*) = (P[a,x_n],\col) \odot
        (\mu_{c_P}[x_n,w_P],\monochr_{c_P}) \odot (P[w_P:],\col)$$
    \item  \label{it:constr_set_path} For each $(c,s)\in \CS_n = (\CC_n\times \{+,-\})$ define the set $\PP_{(c,s)}$ as  $\{(P^*,\col^*) \mid (P, \col) \in \PP_n, (c_P,s_P) = (c,s)\}$, where $s_P$ is the sign of $\mu_{c_P}[x_n,w_P]$.
\end{enumerate}
\vspace{.2cm}

This construction is illustrated~\cref{fig:colarboration}. For each non-empty set $\PP_{(c,s)}, (c,s) \in \CS_n$, we then create a child $n'$ of $n$ associated with $\PP_{(c,s)}$, where the edge $\{n,n'\}$ is labelled $(c,s)$. Then, we recursively apply the process to each created node. 

\begin{claim}
    \label{claim:wcsp}
For any node $n$ of $T$, $\PP_n$ is a set of well-coloured shortest paths. 
\end{claim}

\begin{proofclaim}
    This trivially holds for $\PP_r$ where $r$ is the root of $T$. We show that for a node $n$ of $T$, if $(P,\col)\in \PP_n$ is a well-coloured shortest $a$-$b_i$ path,
    then the path $(P^*, \col^*) = (P[a,x_n],\col) \odot (\mu_{c_P}[x_n,w_P],\monochr_c) \odot (P[w_P:],\col_i)$ constructed in~\cref{it:constr_path} is a well-coloured shortest $a$-$b_i$ path. It is clear from the construction that $P'$ is an $a$-$b_i$ path, moreover it is also a shortest path since $\mu_{c_P}[x_n,w_P]$ is a shortest $x_n$-$w_P$ path. 
    Since $(P,\col)$ is well-coloured, so are $(P[a,x_n],\col)$ and $(P[w_P:],\col)$. 
    Since $\{z_P,w_P\}$ is the last edge of $(P,\col)$ coloured $c_P$, this colour does not appear in $(P[w_P:],\col)$.
    Additionally, if $c_P \in \ColoursP(P[a,x_n])$, it has to be the last colour to appear, otherwise $(P,\col)$ would not be well-coloured. It follows that the path $(P^*,\col^*)$ is well-coloured. Thus, for each $(c,s) \in\CS_n$, the set  $\PP_{(c,s)}$ associated to a child $n'$ of $n$ is a set of well-coloured shortest paths. Hence, by induction, for any node $n$ of $T$, $\PP_n$ is a set of well-coloured-shortest paths. 
\end{proofclaim}

We now show that the construction terminates. In particular, we show that for any non-leaf node $n$ and any of its non-leaf children $n_i$, we have $\dist(a,x_n) < \dist(a,x_{n_i})$.
Since for any node $n$ of $T$, the paths of sets $\PP_n$ are shortest paths of length  $D$, this implies that the construction terminates. 
By construction (see~\cref{it:constr_path}), 
for $(P',\col') \in \PP_{n_i}$, there is a path $(P,\col)\in \PP_n$ such that $(P',\col') = (P^*,\col^*)$, implying that $(P[a,x_n],\col) = (P'[a,x_n],\col')$. 
Moreover, let $\{x_n,y\}$ be the first edge of $(P'[x_n:],\col')$. By definition of  $\PP_{n_i} = \PP_{(c_P,s_P)}$ in~\cref{it:constr_set_path}, all paths in $\PP_{n_i}$ share this edge and hence share the subpath $(P'[a,y],\col')$.
Hence, $\dist(a,x_n) < \dist(a,y) \leq  \dist(a,x_{n_i})$ since the paths of $\PP_n$ and $\PP_{n_i}$ are shortest paths by~\cref{claim:wcsp}.
It is possible to prove that the height of $T$ is at most $k$, but it is not necessary.

The following observation is verified if $n'$ is a child of $n$ by the previous arguments, hence y induction it is also verified for any node of $T_n$. 

\begin{observation}\label{obs:same_subpath}
    Let $n$ be a node of $T$ and $n'\neq n$ a node of $T_n$. Then, we have $\dist(a,x_n) < \dist(a,x_{n'})$, and for $(P,\col)\in \PP_n$ and $(P',\col')\in \PP_{n'}$, $(P[a,x_n],\col) = (P'[a,x_n],\col')$ holds.
\end{observation}

\begin{observation}\label{obs:first_letter}
    Let $\{n,n'\}$ be an edge of $T$ and $(P,col)\in\PP_{n'}$. Then, the first letter of $\ColoursSigns(P[x_n,:],\col)$ is $\lab(\{n,n'\})$. If $n'$ is a non leaf node, there might be more than one letter in  $\ColoursSigns(P[x_n,x_{n'}],\col)$.
\end{observation}

\begin{claim}
\label{claim:colarbo}
 $(T,\lab)$ is a \colarbo{} tree.
\end{claim}
\begin{proofclaim}
Let $n$ be a node of $T$ and $n_1,\dots,n_t$ its children. \cref{it:colarbo_distinct} of~\cref{def:colarbo} is verified, since for each $(c,s) \in \CS_n$, at most one edge $\{n,n_i\}$ is labelled $(c,s)$. 
We now prove~\cref{it:colarbo_subtree}, i.e. $T_{n_i}$ does not contain any edge with labels in $\LS(n)\setminus\{\lab(\{n,n_i\})\}$. Recall that $\LS(n) = \{(c,s),(c,\overline{s}) \mid (c,s)\in \LL(n)\}$, where $\LL(n) = \{\lab(\{n,n_1\}),\dots,\lab(\{n,n_t\})\}$ and $\overline{s}$ is the opposite of $s$.  
Observe that $\LS(n) \subseteq \CS_n$ (there might be some empty set $\PP_{(c,s)}$), and that $\CS_n $ is defined from the set of colours $\CC_n$ that appear on the first edge after $x_n$ on paths of $\PP_n$.

Fix $n_i$ a child of $n$, and let $(c_i,s_i) = \lab(\{n,n_i\})$. The case when $n_i$ is a leaf is trivial, since $T_{n_i}$ has no edges. Assume now that $n_i$ is not a leaf.
Observe that for any $(P,\col) \in \PP_n$ and by choice of $w_P$, the only colour from $\CC_n$ used in $(P^*[x_n:],\col^*)$ is $c_P$. Formally, we have $\ColoursP(P^*[x_n:],\col^*) \subseteq \ColoursP(P[x_n:],\col)\setminus (\CC_n \setminus \{c_P\})$, thus $\ColoursS(\PP_{n_i}[x_n:]) \subseteq \ColoursS(\PP_n[x_n:]) \setminus (\CC_n \setminus \{c_i\})$ for $1 \leqslant i \leqslant t$.
By induction, this implies that for any node $n'$ of $T_{n_i}$, $\ColoursS(\PP_{n'}[x_{n'}:]) \subseteq (\ColoursS(\PP_n[x_n:]) \setminus (\CC_n \setminus \{c_i\})$. By~\cref{obs:first_letter}, the label of an edge $\{n',n''\}$ of $T_{n_i}$ depends on the colours that appear after $x_{n'}$ in paths of $\PP_{n''}$, hence no edges of $T_{n_i}$ has a label in $(\CC_n \setminus \{c_i\}) \times \{+,-\}$.

We now prove that $(c_i,\overline{s_i})$ is not used as label for edges of $T_{n_i}$. Let $\{n',n''\}$ be an edge of $T_{n_i}$ and assume by contradiction that $\lab(\{n',n''\}) = (c_i,\overline{s_i})$. Moreover, let $(P,\col)\in \PP_{n_i}$ and $(P',\col')\in \PP_{n''}$.   
By~\cref{obs:same_subpath}, $(P[x_n,x_{n_i}],\col) = (P'[x_n,x_{n_i}],\col')$. By~\cref{obs:first_letter}, the first letter of $\ColoursSigns (P'[x_n:],\col')$ is $(c_i,s_i)$ and the first letter of $\ColoursSigns (P'[x_{n'}:],\col')$ is $(c_i,\overline{s_i})$, which is a contradiction since the path $(P',\col')$ is well-coloured (\cref{claim:wcsp}). 
Hence, there are no edges of $T_{n_i}$ with labels in $\CS_n \setminus \{(c_i,s_i)\}$, proving~\cref{it:colarbo_subtree}.

It remains to prove~\cref{it:colarbo_path}, i.e., in every path from the root of $T$ to one of its leaves, the edges with the same label form a connected subpath. 
Let $r=p_1,p_2, \dots, p_t =n$ be the path from the root $r$ of $T$ to a non-leaf node $n$. If there exists a child $n'$ of $n$ such that $\lab(\{n,n'\}) = \lab(\{p_{i-1}, p_i\}) = (c,s), i\in\{2,\dots, t\}$, then we have to show that for any $i< j \leq t$, $\lab(\{p_{j-1}, p_j\}) = (c,s)$. 
Suppose that there are such $n'$ and $i$. For $(P,\col)\in \PP_{p_i}$ and $(P',\col')\in \PP_{n'}$, by~\cref{obs:same_subpath} we have $(P[x_{p_{i-1}},x_{p_i}],\col) = (P'[x_{p_{i-1}},x_{p_i}],\col')$. By~\cref{obs:first_letter}, the first edge of these subpaths is coloured $c$, and so is the first edge of $(P'[x_n:],\col')$. 
Since $(P',\col')$ is well\ap{-}coloured, the whole subpath $(P'[x_{p_{i-1}},x_n],\col')$ is entirely coloured with $c$. 
For any $i< j\leq t$, $(P''[x_{p_{j-1}},x_{p_j}],\col'') = (P'[x_{p_{j-1}},x_{p_j}],\col')$, with $(P'',\col'') \in \PP_{p_j}$, and this implies that the first colour of $(P''[x_{p_{j-1}},x_{p_j}],\col'')$ is $c$. Thus $\lab(\{p_{j-1},p_j\}) = (c,s)$ and  
hence~\cref{it:colarbo_path} is verified. It follows that $(T,\col)$ is a \colarbo{} tree, which concludes the proof of~\cref{claim:colarbo}. 
\end{proofclaim}

For a non-leaf node $n$ of $T$, we can observe that for any $a$-$b_i$ path in $\PP_n$, $b_i \in \{b_1\dots, b_\ell\}$, there is a child $n'$ such that $\PP_{n'}$ contains an $a$-$b_i$ path. 
This implies that for any $b_i$ there is a leaf $l$ of $T$ such that $\PP_l$ contains a well-coloured shortest $a$-$b_i$ path. Since $T$ is a \colarbo{} tree, by~\cref{lem:colarbo}, it has $O(3^k)$ leaves and for each leaf $l$ of $T$, $|\PP_l| = 1$ (otherwise our construction would have created a child for node $l$).  It follows that there are $O(3^k)$ vertices at distance $D$ from $a$.
\end{proof}

In order to complete the proof of~\cref{thm:twGak} and show that $\pw(G) = O(3^k)$, we simply apply~\cref{lem:colarbo} with $K = O(3^k)$. 

\section{Vertex-covering with $k$ shortest paths}
\label{sec:vertex}

In this section, $G=(V,E)$ denotes a graph whose \emph{vertices} can be covered by $k$ shortest paths $\mu_1,\dots,\mu_k$. As before we endow each base path $\mu_c$ with a direction, but now colours are assigned to vertices. We can easily adapt the notions of good colourings of the previous section to these vertex-colourings. Again, for any pair of vertices $a$ and $b$, there is a well-coloured shortest path joining them (\cref{le:goodcolV}), which defines a \WordColoursSigns. But we shall see that now (unlike in the simpler case of edge-coverings), we may have two distinct vertices $b$ and $c$ at the same distance $D$ from $a$, and well-coloured shortest $a$-$b$ and $a$-$c$ paths with the same \WordColoursSigns. More efforts will be needed to recover a (slightly larger) upper bound on the number of vertices at distance $D$ from $a$ (\cref{le:distV}).

\paragraph{Good colourings} For each vertex $v$ of $G$, let $\Colours(v)$ denote the set of indices (colours) $c \in \{1\dots, k\}$ such that $v$ is a vertex of $\mu_c$. Let $P$ be an $a$-$b$ path of $G$, from vertex $a$ to vertex $b$. A \emph{colouring} of $P$ is a function $\col: V(P) \rightarrow \{1, \ldots, k\}$ assigning to each vertex $v$ of $P$ one of its colours $\col(v) \in \Colours(v)$. A coloured path is a pair $(P, \col)$. 
The colouring $\col$ of $P$ is said to be \emph{good} if, for any colour $c$, the subgraph induced by the set of \emph{vertices} using this colour $c$ forms a connected subpath $P[x,y]$ of $P$ (which implies that $P[x,y]=\mu_c[x,y]$). 
A coloured path $(P,\col)$ where $\col$ is a good colouring is called \emph{well-coloured}. 

Operators $\oplus$ and $\odot$ naturally extend to (vertex) coloured paths, with the precaution that  $(\nu,\col)\odot (\eta, \col')$ is defined only when their common vertex $x$, the last of $\nu$ and first of $\eta$, satisfies $\col(x) = \col'(x)$.  
Given a coloured path $(P,\col)$, we again denote by $(P[x,y],\col)$ its restriction to a subpath $P[x,y]$ of $P$. 
For each colour $1 \leq c \leq k$, let now $(P, \monochr_c)$ denote the monochromatic colouring of $V(P)$ with colour $c$. 
With these notations, any well-coloured $a$-$b$ path $(P,\col)$ with colours $(c_1,\dots, c_l)$ is of the form 
\begin{equation*}
    (\mu_{c_1}[a_1,b_1], \monochr_{c_1}) \oplus (\mu_{c_2}[a_2,b_2], \monochr_{c_2})\oplus \ldots \oplus (\mu_{c_l}[a_l,b_l], \monochr_{c_l})
\end{equation*}
for some vertices $a=a_1, b_1, a_2, b_2, \ldots, a_l, b_l = b$. 
Like in the previous section, we have:%

\begin{lemma}\label{le:goodcolV}
For any pair of vertices $a$ and $b$ of $G$, there exists a well-coloured shortest $a$-$b$ path. 
\end{lemma}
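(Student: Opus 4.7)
The plan is to mimic the argument of \cref{le:goodcol} almost verbatim, with the appropriate adaptations to the vertex setting. Among all shortest $a$-$b$ paths, I would choose a path $P$ that admits a colouring with a minimum number of monochromatic subpaths, and call $\col$ such an optimal colouring. Assume for contradiction that $\col$ is not good: then some colour $c$ appears on two vertices that are separated (along $P$) by a vertex of a different colour. Concretely, there are vertices $v_1, v_2, v_3$, appearing in this order on $P$, with $\col(v_1)=\col(v_3)=c\ne \col(v_2)$. Since $v_1,v_3\in V(\mu_c)$, the subpath $\mu_c[v_1,v_3]$ exists and is a shortest $v_1$-$v_3$ path of $G$.

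Next I would form the walk $P'$ obtained from $P$ by replacing $P[v_1,v_3]$ with $\mu_c[v_1,v_3]$. Because $\mu_c[v_1,v_3]$ has length at most $|P[v_1,v_3]|$, the total length of $P'$ is at most $|P|$. If $P'$ were not a simple path, one could shortcut it into a strictly shorter $a$-$b$ walk and hence into a strictly shorter $a$-$b$ path, contradicting the fact that $P$ is a shortest $a$-$b$ path. Therefore $P'$ is a simple $a$-$b$ path of length exactly $|P|$, i.e.\ it is again a shortest $a$-$b$ path.

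Finally I would define a colouring $\col'$ of $P'$ by giving colour $c$ to every vertex of the new subpath $\mu_c[v_1,v_3]$ (which is legitimate because $c\in\Colours(v)$ for every vertex $v$ of $\mu_c$), and by keeping the colours of $\col$ on the remaining vertices of $P'$ (those strictly before $v_1$ or strictly after $v_3$ along $P$). The boundary vertices $v_1$ and $v_3$ already had colour $c$ under $\col$, so $\col'$ is well defined. The three monochromatic pieces of $(P,\col)$ containing $v_1$, $v_2$, $v_3$ have been merged into a single monochromatic $c$-piece in $(P',\col')$, while all other monochromatic pieces of $(P,\col)$ are preserved. Hence $(P',\col')$ has strictly fewer monochromatic subpaths than $(P,\col)$, contradicting the minimality in our initial choice. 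Thus $\col$ must be good, which proves the lemma.

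The only point that requires care is the simplicity of $P'$; but as explained above, this is immediate from the fact that $P$ was a shortest $a$-$b$ path, since any self-intersection of $P'$ would yield a strictly shorter $a$-$b$ walk. Everything else is a routine transposition of the edge-colouring argument to vertex colourings, using that for vertex-colourings the $\odot$-operator requires the shared vertex to have matching colour on both sides, which is automatically satisfied at $v_1$ and $v_3$ in our rewrite.
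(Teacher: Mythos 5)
Your proposal is correct and follows the same extremal-replacement argument as the paper's own proof: choose a shortest $a$-$b$ path minimising the number of monochromatic subpaths, locate a triple $v_1,v_2,v_3$ witnessing failure of goodness, splice in $\mu_c[v_1,v_3]$ with colour $c$, and derive a contradiction. The extra care you devote to simplicity of $P'$ (and to the colour-matching requirement of $\odot$ at $v_1,v_3$) is a welcome clarification of a point the paper leaves implicit in the vertex case.
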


\begin{proof}
    Among all shortest $a$-$b$ paths, choose one that admits a colouring with a minimum number of monochromatic subpaths. Let $(P,\col)$ be such a coloured path. 
Assume for a contradiction that the colouring $\col$ is not good. Then 
there exist 
three vertices $x,y$ and $z$, appearing in this order in $P$ such that $\col(x) = \col(z) \neq \col (y)$. Therefore $x$ and $z$ are on the same base path $\mu_c$. Let $P'$ be the path obtained from $P$ by replacing $P[x,z]$ by $\mu_c[x,z]$. 
Notice that $P'$ is no longer than $P$, since $\mu_c[x,z]$ is a shortest $x$-$z$ path of graph $G$. 
Moreover, in $P'$ we can colour all vertices of $P'[x,z]$ with colour $c$, and keep all other colours unchanged. Hence $P'$ has strictly fewer monochromatic subpaths than $P$ --- a contradiction.  
\end{proof}

\paragraph{Colours-signs word} Let $(P, \col)$ be a well-coloured $a$-$b$ path; we recall that we see it as being directed from $a$ to $b$. As in~\cref{sec:edge}, each monochromatic subpath $P'$ of $P$, say of colour $c$, induces a sign ($+$ or $-$) depending on its direction w.r.t. $\mu_c$, if $P'$ has at least two vertices. If $P'$ has a unique vertex, we assign to it sign $+$.
Therefore, we can again define the \emph{\WordColoursSigns{}} $\ColoursSigns(P,\col) = ((c_1,s_1),(c_2,s_2),\dots,(c_l,s_l))$ on the alphabet $\{1,\dots, k\} \times \{+,-\}$, corresponding to the colours and signs of the monochromatic subpaths of $P$ according to the ordering in which these subpaths appear from $a$ to $b$. 

In the case of edge-covering, we had the elegant statement of~\cref{le:colsign}, by which, given a vertex $a$, a \WordColoursSigns{} $\omega$ and a distance $D$, there is a unique vertex $b$ (if any exists) at distance $D$ such that the well-coloured shortest $a$-$b$ path corresponds to this word. 

\begin{figure}[h]
    \centering
    \includegraphics[scale=1.6]{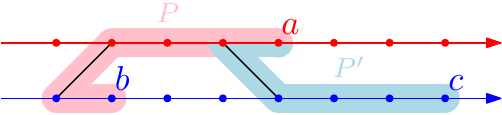}
    \caption{Two well-coloured paths $(P,\col)$ and $(P',\col')$  with same \WordColoursSigns{} $\omega=((red,-),(blue,+))$, same length $5$ and same start vertex $a$ but different end-vertices ($b$ and $c$).}
    \label{fig:difpathsamecol}
\end{figure}

Unfortunately, this does not extend to vertex-covering:~\cref{fig:difpathsamecol} presents two distinct vertices $b$ and $c$ located at the same distance $D$ from vertex $a$, together with a well-coloured shortest $a$-$b$ path $(P,\col)$ and a well-coloured shortest $a$-$c$ path $(P',\col')$. 
These coloured paths starting from $a$ have the same \WordColoursSigns{} and the same length, but this does not imply that their endpoints are equal. However, we can prove a single-exponential bound for this case as well with similar (but more involved) arguments as the ones in~\cref{ss:improving}.


Again, adapting notations to vertex-coloured paths,  for a coloured path $(P,\col)$ we define $\ColoursP(P,\col) = \{\col(x) \mid x \in V(P)\}$, the set of colours that appear on  $(P,\col)$. 
For a set of coloured paths $\PP$, $\ColoursS(\PP) = \bigcup_{(P,\col) \in \PP} \ColoursP(P,\col)$.
For a coloured $a$-$b$ path $(P,\col)$ and a vertex $x$ of $P$, we let $P[x:] = P[x,b]$ and for a set of coloured paths $\PP$ that shares the vertex $x$ we let $\PP[x:]= \bigcup_{(P,\col) \in \PP} (P[x:],\col)$. 

\begin{lemma}\label{le:distV}
    For any vertex $a$ of $G$ and any integer $D$, there are $O(k\cdot3^k)$ vertices at distance exactly $D$ from $a$. 
    \label{lem:boundak_vertex}
\end{lemma}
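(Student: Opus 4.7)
The strategy is to adapt the proof of \cref{lem:boundak} to the vertex-covering setting. I expect the final $O(k \cdot 3^k)$ bound to arise from two sources of multiplicity: a factor $k$ from branching on the colour that a well-coloured path assigns to the starting vertex $a$, and a factor $3^k$ from a colarbo-tree construction analogous to the one in \cref{lem:boundak}.

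For each vertex $b_i$ at distance $D$ from $a$, I would first fix a well-coloured shortest $a$-$b_i$ path via \cref{le:goodcolV}, then partition the resulting family according to the colour $\col(a) \in \Colours(a)$ assigned to $a$. Since $|\Colours(a)| \leq k$, this yields at most $k$ groups, so it suffices to bound the number of distinct endpoints within each group by $O(3^k)$. Inside a group all paths agree at $a$ (same vertex, same colour), so the recursive construction of \cref{lem:boundak} can be launched with $x_r = a$: at each node $n$ identify the farthest common prefix vertex $x_n$, define $\CC_n$ as the set of colours $c$ with $x_n \in V(\mu_c)$ that appear as the first colour after $x_n$ in some path of $\PP_n$, replace each path's portion up to the last vertex of colour in $\CC_n$ by the corresponding monochromatic subpath of $\mu_{c_P}$, and branch according to the colour-sign label of the resulting first segment. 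The main technical adaptation compared to the edge case is that colours now live on vertices and coloured paths are concatenated with $\oplus$ rather than $\odot$, so the replacement step must respect the transition edges between consecutive monochromatic segments; restricting $\CC_n$ to colours $c$ with $x_n \in V(\mu_c)$ ensures that each replacement is well-defined. Verifying that the resulting tree is a colarbo tree closely follows \cref{claim:colarbo}, and \cref{lem:colarbo} then caps its number of leaves by $O(3^k)$.

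The principal obstacle is establishing a vertex analog of \cref{le:colsign}: that each leaf of the constructed tree corresponds to a single endpoint. As \cref{fig:difpathsamecol} illustrates, this cannot follow from the colours-signs word alone, since two well-coloured shortest paths starting at $a$ with the same word and same length can end at different vertices in the vertex setting. The key point is that leaves of our tree satisfy a stronger invariant: by construction, all paths at a leaf agree on every farthest-common-prefix vertex $x_n$ along the root-to-leaf path, and any divergence in the placement of a transition edge between two monochromatic segments would have triggered a further branching step (in the concrete example of \cref{fig:difpathsamecol}, the two paths would split at the node where the shared red segment terminates, one contributing a length-zero residual red segment and the other a non-trivial one, producing two different colour-sign labels and hence two different children). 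A cut-and-paste argument in the spirit of \cref{le:colsign}, carried out by induction on the number of monochromatic segments, should then show that two paths at the same leaf must share the lengths and transition vertices of every monochromatic segment, hence also their endpoint. Combining this uniqueness with the $O(3^k)$-leaf bound per group and summing over the at most $k$ groups yields the desired $O(k \cdot 3^k)$ bound.
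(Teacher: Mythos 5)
Your proposal identifies the right overall structure (build a colarbo tree and invoke \cref{lem:colarbo}), but the two adaptations you propose for the vertex case are where the paper's proof is doing essential work, and both are incorrect as stated.

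First, restricting $\CC_n$ to colours $c$ with $x_n \in V(\mu_c)$ leaves paths orphaned. If $(P,\col)$ is a path in $\PP_n$ and the vertex $z_P$ just after $x_n$ has $\col(z_P)=c'$ with $x_n\notin V(\mu_{c'})$ (which happens precisely when $z_P$ starts a new monochromatic segment of a colour not present at $x_n$), then $c'\notin \CC_n$ under your restriction, and $P$ cannot be assigned to any child of $n$. You cannot simply drop these paths; they contribute endpoints that must be counted. The paper resolves this by defining $y_P$ as the first vertex of $\mu_{c_P}$ adjacent to $x_n$ and gluing with $\oplus$, at the cost of a possible $+2$ increase in the path length per level of the tree. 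This length drift is not a nuisance to be engineered away; it is the central obstacle.

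Second, and as a direct consequence, the claim that each leaf corresponds to a unique endpoint does not hold. Because the replacement step can lengthen paths, the paths attached to a leaf node have lengths ranging over $[D,D+2k]$ (using that the tree has height at most $k$), and the leaf condition in the paper is merely that all of them are coloured subpaths of the longest one. This yields up to $2k+1$ distinct endpoints per leaf, and \emph{that} is the source of the factor $k$ in the bound. Your attribution of the factor $k$ to the initial partition by $\col(a)$ is misplaced: the paper already absorbs that partition into the root of the colarbo tree (the root branches on the colour-sign of $a$), and it costs nothing beyond the $O(3^k)$ leaf bound. Finally, you also omit a technical but necessary device: some paths in $\PP_n$ may already be coloured subpaths of the common prefix $(P_0[a,x_n],\col_0)$ and hence have no successor of $x_n$ at all; the paper collects these in $\PP^\circ_n$ and attaches them to an arbitrary non-empty child, which your description does not account for.
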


\begin{proof}
Let $\{b_1\dots, b_\ell\}$ be the set of vertices at distance $D$ from $a$ in $G$ and $\PP$ denote a set $\{(P_1,\col_1),\dots, (P_\ell,\col_\ell)\}$ of well-coloured shortest $a$-$b_i$ paths. 
We aim to construct a \colarbo{} tree $(T, \lab)$ such that each node $n$ of $T$ is associated with a set of paths $\PP_n$ and for each vertex $b_i$ there is a leaf $l$ of $T$ such that $\PP_l$ contains a well-coloured $a$-$b_i$ path. However, unlike in the proof of~\cref{lem:boundak} where any leaf node was associated to exactly one path, the set $\PP_l$ corresponding to any leaf node $l$ can contain up to $2k+1$ paths.

We will construct recursively $(T,\lab)$ starting from the root $r$ of $T$. Initially $\PP_r = \PP$. For every node $n$ of $T$, $\PP_n$ will be a set of well-coloured paths from $a$ to a subset of vertices of $\{b_1\dots, b_\ell\}$. These paths may not be shortest paths, but will be of length at most $D+2d$ where $d$ is the depth of the node $n$ in $T$ (the depth of the root being 0). 
Let $n$ be an already computed node of $T$. 
If there is a path in $\PP_n$ such that all the others paths of $\PP_n$ are coloured subpaths of it, do nothing: $n$ is a leaf of our tree and no recursive step is performed at $n$. If there is no such path, we will grow the tree from $n$. We now construct the sets of paths that will be associated with the children of $n$. 

We first consider the case where $n$ is the root $r$ of $T$. Note that there might exist $(P,\col),(P',\col') \in \PP_n$ such that $\col(a)\neq \col'(a)$. In any case: \\
\begin{enumerate}
    \item Let $\CC_n = \{\col(a) \mid (P, \col) \in \PP_n\}$. 
    \item \label{it:constr_path_vertex_root} For each $(P,\col) \in \PP_n$, let $w_P$ be the last vertex of $P$ such that $c_P = \col(w_P)\in \CC_n$. Construct the path $(P^*,\col^*) = (\mu_{c_P}[a,w_P],\monochr_{c_P}) \odot (P[w_P:],\col)$.
    \item  \label{it:constr_set_path_vertex_root} For each $(c,s)\in \CS_n = (\CC_n\times \{+,-\})$ define the set $\PP_{(c,s)}$ as  $\{(P^*,\col^*) \mid (P, \col) \in \PP_n, (c_P,s_P) = (c,s)\}$, where $s_P$ is the sign of $\mu_{c_P}[a,w_P]$.
\end{enumerate}
\vspace{.2cm}
If $n \neq r$, let $(P_0,\col_0)$ be a longest path of $\PP_n$ and $x_n$ be the last vertex of $P_0$ such that for any path $(P,\col)\in\PP_n$, either $(P_0[a,x_n],\col_0) = (P[a,x_n],\col)$ or $(P,\col)$ is a coloured subpath of $(P_0[a,x_n],\col_0)$. This vertex is guaranteed to exist, and notice that $x_n$ can be the vertex $a$. 
We proceed as follows: \\
    
\begin{enumerate}[resume]
    \item Let $\PP^\circ_n$ be the set of paths of $\PP_n$ that are coloured subpaths of $(P_0[a,x_n],\col_0)$ and $\hat{\PP}_n =\PP_n \setminus \PP^\circ_n$. 
    \item Let $\CC_n = \{\col(z_P) \mid (P, \col) \in \hat{\PP}_n\}$ where $z_P$ denotes the vertex appearing right after $x_n$ on path $P$. 
    \item \label{it:constr_path_vertex} For $(P,\col) \in \hat{\PP}_n$, let $w_P$ be the last vertex of $P$ such that $c_P = \col(w_P)\in \CC_n$. Let $y_{P}$ be the second vertex of $\mu_{c_P}[x_n,w_P]$ if $x_n\in V(\mu_{c_P})$; otherwise $y_P$ is defined as the first vertex of $\mu_{c_P}$ adjacent to $x_n$. We define: 
    $$(P^*,\col^*) = (P[a,x_n],\col) \oplus (\mu_{c_P}[y_P,w_P],\monochr_{c_P}) \odot (P[w_P:],\col)$$
    
    \item  \label{it:constr_set_path_vertex} For each $(c,s)\in \CS_n = (\CC_n\times \{+,-\})$ define the set $\PP_{(c,s)}$ as  $\{(P^*,\col^*) \mid (P, \col) \in \hat{\PP}_n, (c_P,s_P) = (c,s)\}$. Here $s_P$ is
    the sign of $\mu_{c_P}[y_P,w_P]$, except when $y_P = w_P$ and $\col^*(x_n) = \col^*(y_P)$, in which case $s_P$ is the sign of $\mu_{c_P}[x_n,y_P]$.

    \item Add $\PP^\circ_n$ to an arbitrary non-empty set $\PP_{(c,s)}$.
\end{enumerate}
\vspace{.2cm}

This construction is illustrated in~\cref{fig:colarboration_vertex}. For each non-empty set $\PP_{(c,s)}, (c,s) \in \CS_n$ we then create a child $n'$ of $n$ associated with $\PP_{(c,s)}$ where the edge $\{n,n'\}$ labelled $(c,s)$. Then, we recursively apply the process to each created node.

\begin{figure}[h]
    \centering
    \includegraphics[scale=1.5]{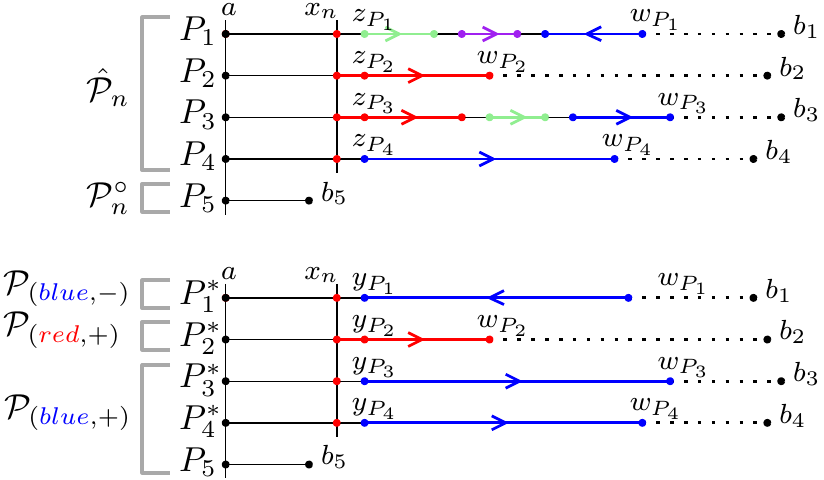}
    \caption{Example of the construction in proof of~\cref{lem:boundak_vertex}
    for a set of paths $\PP_n$. In this example, all paths share the same subpath $(P_1[a,x_n],\col)$ except for $P_5$ that is subpath of the latter. Moreover these paths may have different length. We have $\CC_n =\{\textcolor{red}{red}, \textcolor{blue}{blue},\textcolor{mygreen}{green}\}$ and there are no vertices with colour in $\CC_n$ in the path $(P_i[w_{P_i}:],\col_i)$ except for $w_{P_i}$.   
    The sets $\PP_{(\textcolor{red}{red},-)},\PP_{(\textcolor{mygreen}{green},+)}, \PP_{(\textcolor{mygreen}{green},-)}$ are empty. 
    The vertices $y_{P_1}$ and $y_{P_3} = y_{P_4}$ are the same if $x_n \notin V(\mu_{\textcolor{blue}{blue}})$ and distinct else. 
    The path $P^*_2$ has the same length as $P_1$, the other modified paths might not have the same length than their counterpart, even $P^*_4$ as $z_{P_4}$ might be different from $y_{P_4}$. 
    The path $P_5$ was arbitrarily added to  $\PP_{(\textcolor{blue}{blue},+)}$.}
    \label{fig:colarboration_vertex} 
\end{figure}

\begin{claim}
    \label{claim:wcsp_vertex}
For any node $n$ of depth $d$ of $T$, $\PP_n$ is a set of well-coloured paths of length at most $D+2d$. 
\end{claim}

\begin{proofclaim}
    This trivially holds for $\PP_r$ where $r$ is the root of $T$. We show that for a node $n$ of $T$, if $(P,\col)\in \PP_n$ is a well-coloured $a$-$b$ path, 
    then the path $(P^*, \col^*)$ constructed in~\cref{it:constr_path_vertex_root} or \cref{it:constr_path_vertex} is a well-coloured $a$-$b$ path such that $|P^*|\leq |P|+2$. 
    
    In both cases it is clear from the construction that $P^*$ is an $a$-$b$ path. In the case of~\cref{it:constr_path_vertex_root} where $n=r$, since $(P[w_P:],\col)$ is well-coloured and $\col(w_P) = c_P$ it follows that $(P^*,\col^*)$ is well-coloured. Moreover, $(\mu_{c_P}[a,w_P],\monochr_{c_P})$ is a shortest $a$-$w_P$ path, hence $|P^*| = |P|$.

    In the case of~\cref{it:constr_path_vertex}, since $(P,\col)$ is well-coloured so are $(P[a,x_n],\col)$ and $(P[w_P:],\col)$. 
    The vertex $w_P$ is coloured $c_P$ and if $c_P \in \ColoursP(P[a,x_n])$ it has to be the last colour to appear, otherwise $(P,\col)$ would not be well-coloured. It follows that the path $(P^*,\col^*)$ is well-coloured. 

     We now show that $|P^*|\leq |P| + 2$. This is verified if $x_n\in V(\mu_{c_P})$ since $y_P$ is the second vertex of the shortest $x_n$-$w_P$ path $\mu_{c_P}[x_n,w_P]$ and hence $P^* = P[a,x_n] \odot \mu_{c_P}[x_n,w_P] \odot P[w_P:]$ and $|P^*| \leq |P|$.
    Otherwise, in the case when $x_n \notin V(\mu_{c_P})$, $y_P$ is the first vertex of $\mu_{c_P}$ adjacent to $x_n$. The path $\mu_{c_P}[y_P,w_P]$ is a shortest $y_P$-$w_P$ path, hence $|\mu_{c_P}[y_P,w_P]| \leq |(y_P) \oplus P[x_n,w_P]| \leq 1 + |P[x_n,w_P]|$. This implies the following:
    \begin{align*}
        |P^*| &= |P[a,x_n] \oplus \mu_{c_P}[y_P,w_P] \odot P[w_P:]| \\
        &= |P[a,x_n]| + 1 + |\mu_{c_P}[y_P,w_P]| + |P[w_P:]| \\
        &\leq |P[a,x_n]| + 2 + |P[x_n,w_P]| + |P[w_P:]| = |P| + 2
    \end{align*}

    It remains that for each $(c,s) \in\CS_n$, the set  $\PP_{(c,s)}$ associated to a child $n'$ of $n$ is a set of well-coloured paths of length at most $D'+2$, where $D'$ is the maximum length of paths in $\PP_n$. Hence by induction, for any node $n$ of depth $d$ of $T$, $\PP_n$ is a set of well-coloured paths of length at most $D+2d$. 
\end{proofclaim}

Let $n \neq r$ be a node of $T$. Observe that for $(c,s)\in \CS_n$, all paths of $\PP_{(c,s)}$ share the same successor of $x_n$, the vertex $y_P$ defined in~\cref{it:constr_path_vertex}. 
Thus for every node $n \neq r$ of $T$ with parent $p \neq r$, we can define a vertex $y_n$, the successor of $x_p$ in paths of $\hat{\PP}_n$, see~\cref{fig:def_y_n}. If $n = r$ or $p = r$ we let $y_n = a$.

\begin{figure}[h]
    \centering
    \includegraphics[scale=1.7]{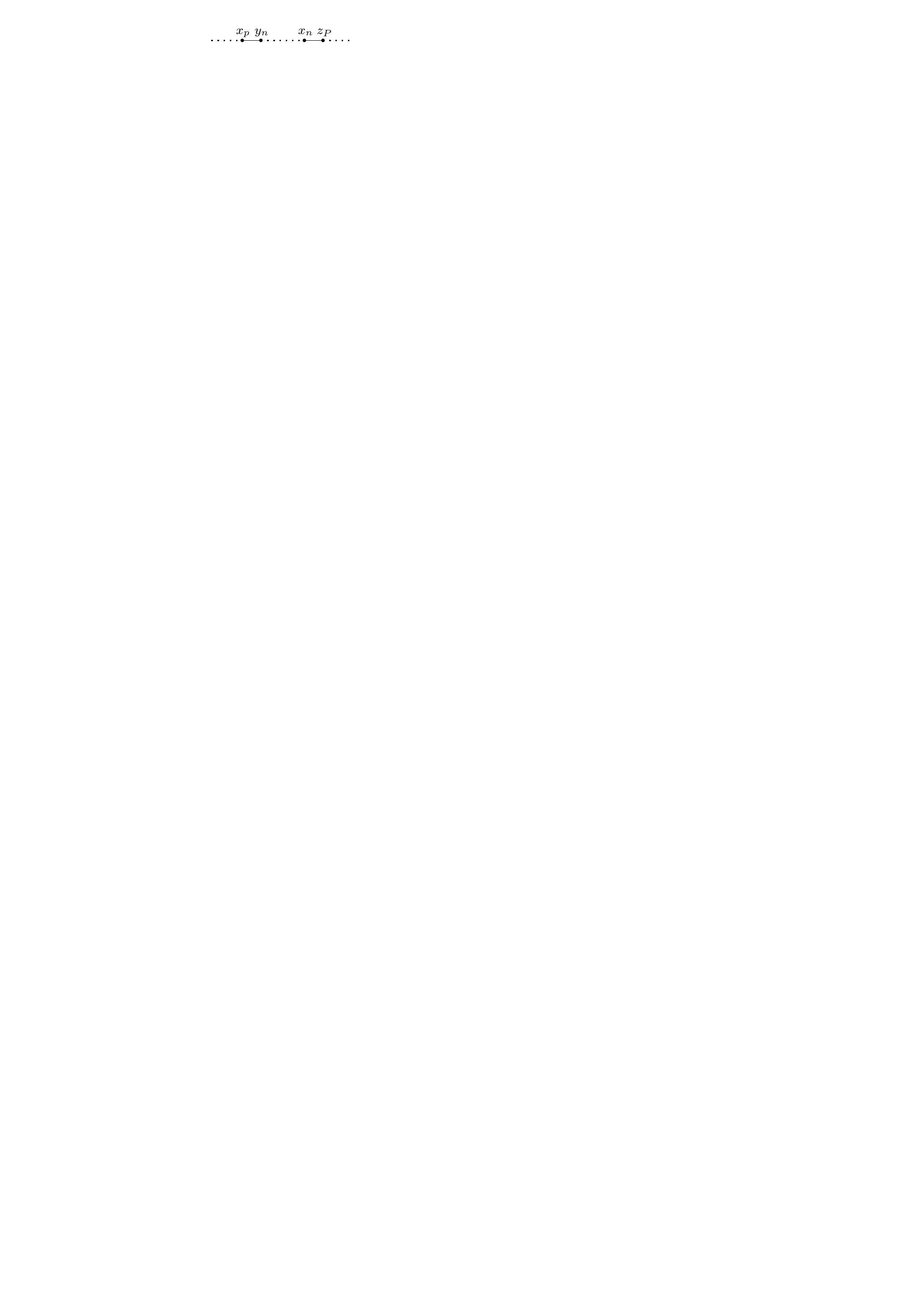}
    \caption{Structure of a path $(P,\col)\in \hat{\PP}_n$ with $p$ the parent of $n$. Note that $y_n = x_n$ is possible.}
    \label{fig:def_y_n}
\end{figure}

Let now $n \neq r$ be a node of $T$ and $n'$ be one of its children. For $(P',\col')\in \hat{\PP}_{n'}$ there is $(P,\col)\in \hat{\PP}_n$ such that 
$(P',\col')$ corresponds to the coloured path $(P^*,\col^*)$ constructed from $(P,\col)$ at node $n$, see~\cref{it:constr_path_vertex}. 
Therefore $(P[a,x_n],\col) = (P'[a,x_n],\col')$. 
Moreover the vertex $y_{n'}$ is the successor of $x_n$ shared among all the paths in $\hat{\PP}_{n'}$. Hence, if $n'$ is a not a leaf, the vertex $x_{n'}$ appears strictly after $x_n$ on paths of  $\hat{\PP}_{n'}$ since  $x_{n'}$  is either $y_{n'}$ or comes after it. 
Therefore, the following observation is verified by induction.

\begin{observation}\label{obs:same_subpath_vertex}
    Let $n \neq r$ be a node of $T$, $n'\neq n$ a node of $T_n$, $(P,\col)\in \hat{\PP}_n$ and $(P',\col')\in \hat{\PP}_{n'}$. Then $(P[a,x_n],\col) = (P'[a,x_n],\col')$ and if $n'$ is a not a leaf, $x_n$ appears strictly before $x_{n'}$ on the path $P'$.
\end{observation}

The following observation is a direct consequence of our construction.
\begin{observation}\label{obs:first_letter_vertex}
    Let $\{n,n'\}$ be an edge of $T$ 
    and $(P,col)\in\hat{\PP}_{n'}$. Then $\lab(\{n,n'\})$ is a letter of $\ColoursSigns(P[a,x_{n'}],\col)$. Moreover, the colour of $\lab(\{n,n'\})$ is the colour of the first letter of $\ColoursSigns(P[y_{n'}:],\col)$ (the signs can be different, see the special case for the definition of $s_P$ in~\cref{it:constr_set_path_vertex}).
\end{observation}

The following claim implies that the construction terminates. 

\begin{claim}\label{claim:height_tree}
    The height of $T$ is at most $k$.
    \end{claim}
\begin{proofclaim}    
Let $n \neq r$ be a node of $T$ and $(P,\col)\in \hat{\PP}_n$. We show that $\ColoursP(P^*[y_P:])\subsetneq \ColoursP(P[y_n:])$. 
From the choice of $w_P$ in~\cref{it:constr_path_vertex}, $\ColoursP(P^*[y_P:])\subseteq \ColoursP(P[y_n:])\setminus (\CC_n\setminus \{c_P\})$. If $|\CC_n| \geq 2$ it follows that $\ColoursP(P^*[y_P:])\subsetneq \ColoursP(P[y_n:])$. 
If $|\CC_n| = 1$, then $c = \col(y_n)\notin \CC_n$, indeed if $c$ was the only colour in $\CC_n$, since paths in $\hat{\PP}_n$ are well-coloured and $y_n\neq y_P$, they all share the subpath $(\mu_c[y_n,y_P],\monochr_c)$, a contradiction with the choice of $x_n$.
Hence $\ColoursP(P^*[y_P:])\subseteq \ColoursP(P[y_n:])\setminus \{c\}$.
It follows that for any child $n'$ of $n$, $\ColoursS(\hat{\PP}_{n'}[y_{n'}:])\subsetneq \ColoursS(\hat{\PP}_n[y_n:])$.
Hence for a node $n\neq r$ of $T$, $|\ColoursS(\hat{\PP}_n[y_n:])|\geq 1$, $|\ColoursS(\hat{\PP}_{n'}[y_{n'}:])|< |\ColoursS(\hat{\PP}_n[y_n:])|$ for a child $n'$ of $n$ and $|\ColoursS(\PP_r)|\leq k$. It follows that the height of $T$ is at most $k$. 
\end{proofclaim}

\begin{claim}
\label{claim:colarbo_vertex}
 $(T,\lab)$ is a \colarbo{} tree.
\end{claim}
\begin{proofclaim}
Let $n$ be a node of $T$ and $n_1,\dots,n_t$ its children.~\cref{it:colarbo_distinct} of~\cref{def:colarbo} is verified since for each $(c,s) \in \CS_n$ at most one edge $\{n,n_i\}$ is labelled $(c,s)$. 

We now prove~\cref{it:colarbo_subtree}, i.e. $T_{n_i}$ does not contain any edge with labels in $\LS(n)\setminus\{\lab(\{n,n_i\})\}$. Recall that $\LS(n) = \{(c,s),(c,\overline{s}) \mid (c,s)\in \LL(n)\}$, where $\LL(n) = \{\lab(\{n,n_1\}),\dots,\lab(\{n,n_t\})\}$ and $\overline{s}$ is the opposite of $s$.  
Observe that $\LS(n) \subseteq \CS_n$ (there might be some empty set $\PP_{(c,s)}$), and that $\CS_n $ is defined by the set of colours $\CC_n$ that appear on the first vertices after $x_n$ on paths of $\hat{\PP}_n$.

Fix $n_i$ a child of $n$ and let $(c_i,s_i) = \lab(\{n,n_i\})$. The case when $n_i$ is a leaf is trivial, since $T_{n_i}$ has no edges, so let us assume that $n_i$ is not a leaf.
For any $(P,\col) \in \hat{\PP}_n$  by choice of $w_P$, 
$\ColoursP(P^*[y_P:],\col^*) \subseteq (\ColoursP(P[y_n:],\col)\setminus (\CC_n \setminus \{c_P\})$ (replace $y_P$ by $a$ if $n$ is the root). Thus $\ColoursS(\hat{\PP}_{n_i}[y_{n_i}:]) \subseteq (\ColoursS(\hat{\PP}_n[y_n:]) \setminus (\CC_n \setminus \{c_i\})$.
By induction, this implies that for any node $n'$ of $T_{n_i}$, $\ColoursS(\hat{\PP}_{n'}[y_{n'}:]) \subseteq (\ColoursS(\hat{\PP}_n[y_n:]) \setminus (\CC_n \setminus \{c_i\})$. 
By~\cref{obs:first_letter_vertex}, the label of an edge $\{n',n''\}$ of $T_{n_i}$ contains the colour of $y_{n''}$ in paths of $\hat{\PP}_{n''}$, hence no edges of $T_{n_i}$ have a label in $(\CC_n \setminus \{c_i\}) \times \{+,-\}$. 
We now prove that $(c_i,\overline{s_i})$ is not used as label for edges of $T_{n_i}$. Let $\{n',n''\}$ be an edge of $T_{n_i}$ and assume that $\lab(\{n',n''\}) = (c_i,\overline{s_i})$. Let $(P,\col)\in \hat{\PP}_{n_i}$ and $(P',\col')\in \hat{\PP}_{n''}$.   
By~\cref{obs:same_subpath_vertex}, $(P[a,x_{n_i}],\col) = (P'[a,x_{n_i}],\col')$, hence by~\cref{obs:first_letter_vertex} $\lab(\{n,n_i\}) = (c_i,s_i)$ is a letter of $\ColoursSigns(P'[a,x_{n_i}],\col')$ and $(c_i,\overline{s_i})$ is a letter of $\ColoursSigns (P',\col')$. This is a contradiction since the path $(P',\col')$ is well-coloured (\cref{claim:wcsp_vertex}). 
Hence there are no edges of $T_{n_i}$ with labels in $\CS_n \setminus \{(c_i,s_i)\}$, proving~\cref{it:colarbo_subtree}.

It remains to prove~\cref{it:colarbo_path}, i.e., in every path from the root of $T$ to one of its leaves, the edges with the same label form a connected subpath. 
Let $r=p_1,p_2, \dots, p_t =n$ be the path from the root $r$ of $T$ to a non-leaf node $n$. If there exists a child $n'$ of $n$ such that $\lab(\{n,n'\}) = \lab(\{p_{i-1}, p_i\}) = (c,s), i\in\{2,\dots, t\}$, then we have to show that for any $i< j \leq t$, $\lab(\{p_{j-1}, p_j\}) = (c,s)$. 
Suppose that there exist such $n'$ and $i$. For $(P,\col)\in \hat{\PP}_{p_i}$ and $(P',\col')\in \hat{\PP}_{n'}$, by~\cref{obs:same_subpath_vertex} $(P[a,x_{p_i}],\col) = (P'[a,x_{p_i}],\col')$. 
By~\cref{obs:first_letter_vertex}, the vertex $y_{p_i}$ (that appears before $x_{p_i}$) is coloured $c$ in both of these subpaths, and so is $y_{n'}$ in $(P',\col')$. 
Since $(P',\col')$ is well-coloured, the subpath $(P'[y_{p_i},y_{n'}],\col')$ is entirely coloured with $c$. 
For any $i< j\leq t$, $(P''[a,x_{p_j}],\col'') = (P'[a,x_{p_j}],\col')$, with $(P'',\col'') \in \hat{\PP}_{p_j}$. This implies that the first colour of $(P''[y_{p_j}:],\col'')$ is $c$. Since there is no edge of $T_{n_i}$ labelled $(c,\overline{s})$, it follows that $\lab(\{p_{j-1},p_j\}) = (c,s)$ and  
hence~\cref{it:colarbo_path} is verified. It remains that $(T,\col)$ is a \colarbo{} tree, which concludes the proof of~\cref{claim:colarbo}.  
\end{proofclaim}

For a non-leaf node $n$ of $T$, we can observe that for any $a$-$b_i$ path in $\PP_n$, $b_i \in \{b_1\dots, b_\ell\}$, there is a child $n'$ such that $\PP_{n'}$ contains a $a$-$b_i$ path (notice that it may be added via the set $\PP^\circ_n$).
This implies that for any $b_i$ there is a leaf $l$ of $T$ such that $\PP_l$ contains a well-coloured $a$-$b_i$ path. For a leaf $l$ of $T$, by construction, $\PP_l$ contains a path $(P,\col)$ including any other path of $\PP_l$. Moreover the length of $P$ is at most $D + 2k$ by Claims~\ref{claim:wcsp_vertex} and~\ref{claim:height_tree}. This implies that $|\PP_l|\leq 2k +1 $, because $\PP_l$ only contains subpaths of $P$, of length between $D$ and $D + 2k$.
Since $T$ is a \colarbo{} tree, by~\cref{lem:colarbo}, it has $O(3^k)$ leaves, hence there are $O(k\cdot 3^k)$ vertices at distance $D$ from $a$.
\end{proof}

In order to complete the proof of~\cref{thm:twGak} and show that $\pw(G) = O(k\cdot 3^k)$, we simply apply~\cref{lem:colarbo} with $K = O(k\cdot 3^k)$. 

\section{Algorithmic consequences} 
\label{sec:algo}

Problem \SGSR\ is known to be NP-complete by~\cite{DIT21}. By a simple reduction, so is \IPCR{}.

\begin{proposition}

\label{pr:npc}
\IPCR{} is NP-Complete.
\end{proposition}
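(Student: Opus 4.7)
The plan is to observe that \IPCR{} sits trivially in NP and then to give an essentially immediate reduction from \SGSR{}, which is already known to be NP-complete by~\cite{DIT21}.

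For membership in NP, the certificate is simply a list of $k$ paths $P_1,\dots,P_k$. Verification amounts to checking, for each $i$, that $P_i$ is an $s_i$-$t_i$ path whose length equals $\dist(s_i,t_i)$ (computable in polynomial time by a single BFS from $s_i$), and that $\bigcup_i V(P_i) = V(G)$. Both checks are polynomial in $|G|+k$.

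For NP-hardness, I would reduce from \SGSR{}. Given an instance of \SGSR{} consisting of a graph $G$ and a set $U=\{u_1,\dots,u_k\}$ of terminals, I build the \IPCR{} instance $(G, \mathcal T)$ where $\mathcal T$ is the family of $\binom{k}{2}$ ordered pairs $\{(u_i,u_j) : 1 \leq i < j \leq k\}$. The construction is clearly polynomial (it does not even modify $G$). The equivalence is then tautological: a family of $\binom{k}{2}$ shortest paths covering $V(G)$ with one path per distinct terminal pair is, by definition, simultaneously a valid solution for the \SGSR{} instance $(G,U)$ and for the \IPCR{} instance $(G,\mathcal T)$.

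There is no real obstacle here; the only thing to remark is that the size of the \IPCR{} instance is polynomial in that of the \SGSR{} instance because $\binom{k}{2} = O(k^2)$, so the number of pairs in the constructed instance is polynomially bounded. This suffices for NP-completeness of \IPCR{} (even though, from the parameterized-complexity point of view of the rest of the paper, the number of pairs grows quadratically in $k$, which is irrelevant for the classical hardness statement).
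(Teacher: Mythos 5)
Your proof is correct and takes essentially the same route as the paper: a direct reduction from \SGSR{} by forming the \IPCR{} instance on all $\binom{k}{2}$ pairs of terminals, with the equivalence being immediate by definition. You additionally spell out membership in NP, which the paper leaves implicit; this is a minor completeness improvement, not a different approach.
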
 
\begin{proof}
We provide a straightforward reduction from \SGSR. Let $(G = (V,E),k)$ be an instance 
of \SGSR{} with terminals $v_1, \ldots v_k$. 
 We build an instance of \IPCR{} by considering all $\binom{k}{2}$ possible pairs of terminals i.e. $(G = (V,E), \bigcup_{1\leq i < j \leq k} \{(v_i,v_j)\})$. 
By definition, $(G,k)$ is a \textsc{Yes}-instance of \SGSR{}, \ie{}there exists a set of $\binom{k}{2}$ shortest paths covering $V(G)$ if and only if $(G,k')$ 
is a \textsc{Yes}-instance of \IPCR, \ie{} a set of $k'$ shortest $v_i$-$v_j$ paths, $1 \leq i < j \leq k$ covering $V(G)$. 
\end{proof}

\paragraph{Proof of~\cref{thm:algoMain} and~\cref{cor:IPC-SGS-XP}} Recall that, for simplicity, we assume that our input graph is connected, but all results easily extend to disconnected graphs. We first show that problem \IPCR\ is FPT when parameterized by $k$, the number of pairs of terminals. As a first consequence, so is problem \SGSR, a special case of \IPCR{} with $\binom{k}{2}$ pairs of terminals. (Both problems are NP-complete, by~\cite{DIT21} and~\cref{pr:npc}.) 
\cref{cor:IPC-SGS-XP} follows immediately, since for both \IPC{} and \SGS{} it suffices to try all possible sets of terminals and use the FPT algorithms for the versions with terminals.

Let us focus on \IPCR, with parameter $k$, input $G$ and the $k$ pairs of terminals $(s_1,t_1), \dots, (s_k, t_k)$. We can assume that the pathwidth (and hence treewidth) of the input graph is upper bounded by a function of $k$, as stated in~\cref{thm:twGvk}, and that we have in the input a tree decomposition of such width. Indeed recall that~\cref{thm:twGvk} does not only provide a combinatorial bound on the pathwidth of YES-instances, but also a simple, BFS-algorithm for computing the suitable path decompositions (which is also, as stated in~\cref{sec:nota}, a tree decomposition of the same width). If the algorithm fails to find a path decomposition of small width, we can directly conclude that our input graph is a NO-instance. 

Therefore, we can use the classical Monadic Second-Order Logic of graphs (henceforth called $\MSOL$) tools on bounded treewidth graphs. 
$\MSOL$ includes the logical connectives $\vee$, $\land$, $\neg$, 
$\Leftrightarrow$, $\Rightarrow$, variables for 
vertices, edges, sets of vertices, and sets of edges, the quantifiers $\forall$ and $\exists$ that can be applied 
to these variables, and five binary relations: $\operatorname{adj}(u,v)$, where $u$ and $v$ are vertex variables and the interpretation is that $u$ and $v$ are adjacent; $\operatorname{inc}(v,e)$, where $v$ is a vertex variable and $e$ is an edge variable and the interpretation is that $v$ is incident to $e$; $v \in V'$, where $v$ is a vertex variable and $V'$ is a vertex set variable; the similar $e \in E'$ on edge variable $e$ and edge set variable $E'$, and eventually equality of two variables of the same nature. 

By a celebrated theorem of Courcelle~\cite{Courcelle90}, any problem expressible in $\MSOL$ can be solved in time $f(\tw)\cdot n$ time on bounded treewidth graphs, if a tree decomposition of the input graph is also given. Function $f$ depends on the formula (hence, on the problem). Courcelle's theorem extends in several ways to optimization problems, and slightly larger classes of formulae, e.g., allowing to identify a fixed number of terminal vertices, as we shall detail later. Here we will refer to~\cite{ALS91}, one of the (alternative) proofs of Courcelle's theorem, with some extensions. 
As noted in~\cite{ALS91}, $\MSOL$ allows to express properties as $\operatorname{Connected}(V',E')$ where $V'$ is a vertex set variable and $E'$ is an edge set variable and the property is true if and only if $(V',E')$ is a connected subgraph of $G$. Also let $\operatorname{Cover}(V_1,\dots,V_k)$ express the fact that vertex subsets $V_1,\dots, V_k$ cover all vertices of the graph, by simply stating that $\forall x (x \in V_1 \vee x \in V_2 \vee \dots x \in V_k)$. \\ 

This allows us to express \IPCR\ as an optimization $\MSOL$ problem, called an EMS-problem in~\cite{ALS91}. 
Let $\varphi(E_1,E_2,\dots, E_k)$ be the formula on edge sets $E_1, \dots, E_k$ expressing the property that there exist $k$ connected subgraphs $(V_1,E_1), \dots, (V_k,E_k)$ of $G$ such that the sets $V_1, \dots, V_k$ cover all vertices of $G$, and graph $(V_i, E_i)$ contains terminals $s_i, t_i$, for all $1 \leq i \leq k$. More formally:
\begin{eqnarray*}
    \varphi(E_1,E_2,\dots, E_k) & = & \exists\ V_1, V_2, \dots V_k\ [(s_1 \in V_1) \land (t_1 \in V_1) \land \dots \land (s_k \in V_k) \land (t_k \in V_k) \\ 
    & \land & \operatorname{Cover}(V_1,\dots,V_k) \land \operatorname{Connected}(V_1,E_1) \land \dots \land \operatorname{Connected}(V_k,E_k)]
\end{eqnarray*}

Consider now the optimization version of this problem, where the goal is to find edge sets $E_1, E_2, \dots, E_k$ satisfying $\varphi(E_1,E_2,\dots, E_k)$ and minimizing $|E_1| + |E_2| + \dots + |E_k|$. Let $\operatorname{OptCover}$ denote this optimum. By Theorem~5.6 in~\cite{ALS91}, this problem can be solved in linear time on bounded treewidth graphs. More precisely, Arnborg \etal~\cite{ALS91} call such problems "EMS linear extremum problems", in the sense that they correspond to linear optimization functions over the sizes of set variables, when these variables satisfy an $\MSOL$ formula over labelled graphs with a fixed number of labels (here we consider each terminal vertex labelled with a different label). In contemporary terms, the problem is FPT parameterized by treewidth plus the number of terminals, and the running time is linear in $n$.

We now observe that our input is a YES-instance of \IPCR\ if and only if $\operatorname{OptCover} = \dist(s_1,t_1) + \dist(s_2,t_2) + \dots \dist(s_k,t_k)$. Indeed if there exist the required shortest $s_i$-$t_i$ paths $P_1, \dots, P_k$ covering the vertex set of the whole graph, then their edge sets $E_1,\dots, E_k$ provide a solution for our optimization problem whose objective is the sum of the lengths of the paths. Conversely, for any of the $k$ connected subgraphs $(V_i, E_i)$ of $G$ such that $s_i,t_i \in V_i$, we have $|E_i| \geq \dist(s_i,t_i)$. Therefore by simply checking if $\operatorname{OptCover}$ corresponds to the sum of the distances between pairs of terminals, we decide whether the input satisfies \IPCR. Altogether, this problem is FPT parameterized by $k$, which concludes the proof of~\cref{thm:algoMain}.

As mentioned in the introduction, the same techniques extend to variants where the covering paths are required to be edge-disjoint or vertex-disjoint, by simply adding disjointness conditions in the $\MSOL$ formula $\varphi$.

\section{Conclusion}

We have shown that graphs that can be covered by $k$ shortest paths have their pathwidth upper-bounded by a function of $k$. Our bound is exponential, and the first natural open question is whether it can be improved to a polynomial bound. Such an improvement cannot rely on path decompositions based on the layers of an arbitrary BFS, since we have examples (see~\cref{fig:exp_bound}) where the same layer contains $2^k$ vertices. Nevertheless, we leave as an open question whether graphs whose vertices (or edges) can be covered by $k$ shortest paths have treewidth at most a polynomial in $k$.

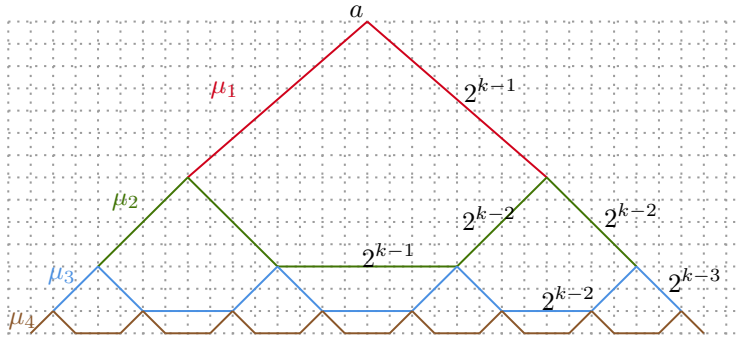
\begin{figure}
    \centering

\tikzset{every picture/.style={line width=0.75pt}} 

\begin{tikzpicture}[x=0.75pt,y=0.75pt,yscale=-0.56,xscale=0.56]

\draw  [draw opacity=0][dash pattern={on 0.84pt off 2.51pt}] (8.5,13) -- (656.5,13) -- (656.5,301) -- (8.5,301) -- cycle ; \draw  [color={rgb, 255:red, 155; green, 155; blue, 155 }  ,draw opacity=1 ][dash pattern={on 0.84pt off 2.51pt}] (8.5,13) -- (8.5,301)(28.5,13) -- (28.5,301)(48.5,13) -- (48.5,301)(68.5,13) -- (68.5,301)(88.5,13) -- (88.5,301)(108.5,13) -- (108.5,301)(128.5,13) -- (128.5,301)(148.5,13) -- (148.5,301)(168.5,13) -- (168.5,301)(188.5,13) -- (188.5,301)(208.5,13) -- (208.5,301)(228.5,13) -- (228.5,301)(248.5,13) -- (248.5,301)(268.5,13) -- (268.5,301)(288.5,13) -- (288.5,301)(308.5,13) -- (308.5,301)(328.5,13) -- (328.5,301)(348.5,13) -- (348.5,301)(368.5,13) -- (368.5,301)(388.5,13) -- (388.5,301)(408.5,13) -- (408.5,301)(428.5,13) -- (428.5,301)(448.5,13) -- (448.5,301)(468.5,13) -- (468.5,301)(488.5,13) -- (488.5,301)(508.5,13) -- (508.5,301)(528.5,13) -- (528.5,301)(548.5,13) -- (548.5,301)(568.5,13) -- (568.5,301)(588.5,13) -- (588.5,301)(608.5,13) -- (608.5,301)(628.5,13) -- (628.5,301)(648.5,13) -- (648.5,301) ; \draw  [color={rgb, 255:red, 155; green, 155; blue, 155 }  ,draw opacity=1 ][dash pattern={on 0.84pt off 2.51pt}] (8.5,13) -- (656.5,13)(8.5,33) -- (656.5,33)(8.5,53) -- (656.5,53)(8.5,73) -- (656.5,73)(8.5,93) -- (656.5,93)(8.5,113) -- (656.5,113)(8.5,133) -- (656.5,133)(8.5,153) -- (656.5,153)(8.5,173) -- (656.5,173)(8.5,193) -- (656.5,193)(8.5,213) -- (656.5,213)(8.5,233) -- (656.5,233)(8.5,253) -- (656.5,253)(8.5,273) -- (656.5,273)(8.5,293) -- (656.5,293) ; \draw  [color={rgb, 255:red, 155; green, 155; blue, 155 }  ,draw opacity=1 ][dash pattern={on 0.84pt off 2.51pt}]  ;
\draw [color={rgb, 255:red, 208; green, 2; blue, 27 }  ,draw opacity=1 ]   (328.5,13) -- (488.5,153) ;
\draw [color={rgb, 255:red, 208; green, 2; blue, 27 }  ,draw opacity=1 ]   (328.5,13) -- (168.5,153) ;
\draw [color={rgb, 255:red, 65; green, 117; blue, 5 }  ,draw opacity=1 ]   (168.5,153) -- (88.5,233) ;
\draw [color={rgb, 255:red, 65; green, 117; blue, 5 }  ,draw opacity=1 ]   (168.5,153) -- (248.5,233) ;
\draw [color={rgb, 255:red, 65; green, 117; blue, 5 }  ,draw opacity=1 ]   (488.5,153) -- (408.5,233) ;
\draw [color={rgb, 255:red, 65; green, 117; blue, 5 }  ,draw opacity=1 ]   (488.5,153) -- (568.5,233) ;
\draw [color={rgb, 255:red, 65; green, 117; blue, 5 }  ,draw opacity=1 ]   (248.5,233) -- (408.5,233) ;
\draw [color={rgb, 255:red, 74; green, 144; blue, 226 }  ,draw opacity=1 ]   (88.5,233) -- (48.5,273) ;
\draw [color={rgb, 255:red, 74; green, 144; blue, 226 }  ,draw opacity=1 ]   (248.5,233) -- (208.5,273) ;
\draw [color={rgb, 255:red, 74; green, 144; blue, 226 }  ,draw opacity=1 ]   (408.5,233) -- (368.5,273) ;
\draw [color={rgb, 255:red, 74; green, 144; blue, 226 }  ,draw opacity=1 ]   (568.5,233) -- (528.5,273) ;
\draw [color={rgb, 255:red, 74; green, 144; blue, 226 }  ,draw opacity=1 ]   (88.5,233) -- (128.5,273) ;
\draw [color={rgb, 255:red, 74; green, 144; blue, 226 }  ,draw opacity=1 ]   (248.5,233) -- (288.5,273) ;
\draw [color={rgb, 255:red, 74; green, 144; blue, 226 }  ,draw opacity=1 ]   (408.5,233) -- (448.5,273) ;
\draw [color={rgb, 255:red, 74; green, 144; blue, 226 }  ,draw opacity=1 ]   (568.5,233) -- (608.5,273) ;
\draw [color={rgb, 255:red, 74; green, 144; blue, 226 }  ,draw opacity=1 ]   (128.5,273) -- (208.5,273) ;
\draw [color={rgb, 255:red, 74; green, 144; blue, 226 }  ,draw opacity=1 ]   (288.5,273) -- (368.5,273) ;
\draw [color={rgb, 255:red, 74; green, 144; blue, 226 }  ,draw opacity=1 ]   (448.5,273) -- (528.5,273) ;
\draw [color={rgb, 255:red, 139; green, 87; blue, 42 }  ,draw opacity=1 ]   (48.5,273) -- (28.5,293) ;
\draw [color={rgb, 255:red, 139; green, 87; blue, 42 }  ,draw opacity=1 ]   (128.5,273) -- (108.5,293) ;
\draw [color={rgb, 255:red, 139; green, 87; blue, 42 }  ,draw opacity=1 ]   (208.5,273) -- (188.5,293) ;
\draw [color={rgb, 255:red, 139; green, 87; blue, 42 }  ,draw opacity=1 ]   (288.5,273) -- (268.5,293) ;
\draw [color={rgb, 255:red, 139; green, 87; blue, 42 }  ,draw opacity=1 ]   (368.5,273) -- (348.5,293) ;
\draw [color={rgb, 255:red, 139; green, 87; blue, 42 }  ,draw opacity=1 ]   (448.5,273) -- (428.5,293) ;
\draw [color={rgb, 255:red, 139; green, 87; blue, 42 }  ,draw opacity=1 ]   (528.5,273) -- (508.5,293) ;
\draw [color={rgb, 255:red, 139; green, 87; blue, 42 }  ,draw opacity=1 ]   (608.5,273) -- (588.5,293) ;
\draw [color={rgb, 255:red, 139; green, 87; blue, 42 }  ,draw opacity=1 ]   (48.5,273) -- (68.5,293) ;
\draw [color={rgb, 255:red, 139; green, 87; blue, 42 }  ,draw opacity=1 ]   (128.5,273) -- (148.5,293) ;
\draw [color={rgb, 255:red, 139; green, 87; blue, 42 }  ,draw opacity=1 ]   (208.5,273) -- (228.5,293) ;
\draw [color={rgb, 255:red, 139; green, 87; blue, 42 }  ,draw opacity=1 ]   (288.5,273) -- (308.5,293) ;
\draw [color={rgb, 255:red, 139; green, 87; blue, 42 }  ,draw opacity=1 ]   (368.5,273) -- (388.5,293) ;
\draw [color={rgb, 255:red, 139; green, 87; blue, 42 }  ,draw opacity=1 ]   (448.5,273) -- (468.5,293) ;
\draw [color={rgb, 255:red, 139; green, 87; blue, 42 }  ,draw opacity=1 ]   (608.5,273) -- (628.5,293) ;
\draw [color={rgb, 255:red, 139; green, 87; blue, 42 }  ,draw opacity=1 ]   (68.5,293) -- (108.5,293) ;
\draw [color={rgb, 255:red, 139; green, 87; blue, 42 }  ,draw opacity=1 ]   (148.5,293) -- (188.5,293) ;
\draw [color={rgb, 255:red, 139; green, 87; blue, 42 }  ,draw opacity=1 ]   (228.5,293) -- (268.5,293) ;
\draw [color={rgb, 255:red, 139; green, 87; blue, 42 }  ,draw opacity=1 ]   (308.5,293) -- (348.5,293) ;
\draw [color={rgb, 255:red, 139; green, 87; blue, 42 }  ,draw opacity=1 ]   (388.5,293) -- (428.5,293) ;
\draw [color={rgb, 255:red, 139; green, 87; blue, 42 }  ,draw opacity=1 ]   (468.5,293) -- (508.5,293) ;
\draw [color={rgb, 255:red, 139; green, 87; blue, 42 }  ,draw opacity=1 ]   (548.5,293) -- (588.5,293) ;
\draw [color={rgb, 255:red, 139; green, 87; blue, 42 }  ,draw opacity=1 ]   (528.5,273) -- (548.5,293) ;

\draw (186,64.4) node [anchor=north west][inner sep=0.75pt]  [color={rgb, 255:red, 208; green, 2; blue, 27 }  ,opacity=1 ]  {$\mu _{1}$};
\draw (98,164.4) node [anchor=north west][inner sep=0.75pt]  [color={rgb, 255:red, 65; green, 117; blue, 5 }  ,opacity=1 ]  {$\mu _{2}$};
\draw (41,231.4) node [anchor=north west][inner sep=0.75pt]  [color={rgb, 255:red, 74; green, 144; blue, 226 }  ,opacity=1 ]  {$\mu _{3}$};
\draw (6,271.4) node [anchor=north west][inner sep=0.75pt]  [color={rgb, 255:red, 139; green, 87; blue, 42 }  ,opacity=1 ]  {$\mu _{4}$};
\draw (310,-4) node [anchor=north west][inner sep=0.75pt]   [align=left] {$\displaystyle a$};
\draw (412,61) node [anchor=north west][inner sep=0.75pt]   [align=left] {$\displaystyle 2^{k-1}$};
\draw (537,173) node [anchor=north west][inner sep=0.75pt]   [align=left] {$\displaystyle 2^{k-2}$};
\draw (410.5,176) node [anchor=north west][inner sep=0.75pt]   [align=left] {$\displaystyle 2^{k-2}$};
\draw (321,209) node [anchor=north west][inner sep=0.75pt]   [align=left] {$\displaystyle 2^{k-1}$};
\draw (594,232) node [anchor=north west][inner sep=0.75pt]   [align=left] {$\displaystyle 2^{k-3}$};
\draw (481,248) node [anchor=north west][inner sep=0.75pt]   [align=left] {$\displaystyle 2^{k-2}$};

\end{tikzpicture}

    \caption{A graph that can be edge-covered with $k$ shortest paths and with $2^k$ vertices at distance $2^k - 1$ from $a$. Therefore, a path decomposition obtained by a BFS from vertex $a$ has width exponential in $k$. Nevertheless, one can easily prove that this graph has pathwidth at most $k$.}
    \label{fig:exp_bound}
\end{figure}

Observe that the approach does not generalize to coverings with few \emph{induced paths}, since grids have arbitrarily large treewidth but are edge-coverable by four induced paths.

On the algorithmic side, we have proved that problems \IPCR\ and \SGSR\ are FPT parameterized by the number of terminals. This directly entails that 
problems \IPC{} and \SGS{} are in XP with respect to the same parameter, by simply enumerating all possible pairs (respectively, sets) of terminals. 
An exciting open question is whether these two problems are FPT. By~\cref{thm:twGvk}, this is equivalent to asking if the problems are FPT when parameterized by the solution size (i.e., number of paths/terminals) + pathwidth. (Indeed, if $k$ is the number of terminals,~\cref{thm:twGvk} ensures that either the pathwidth $\pw$ of the input graph is upper bounded by a function $f(k)$, or we can directly reject the input for being a NO-instance. Therefore, if one of the problems is FPT parameterized by $k + \pw$, we obtain an FPT algorithm parameterized by $k$ as follows. The algorithm checks that $\pw \leq f(k)$ as in~\cref{thm:twGvk}, by a simple breadth-first search from an arbitrary vertex. If the assertion is false, the algorithm rejects. Otherwise it simply remains to apply the algorithm parameterized by $k + \pw$ on parameter $k + f(k)$). Nevertheless, the answer to the question whether these problems are FPT for parameter $k+\pw$ seems non-trivial. At least, while many optimization problems are FPT when parameterized by treewidth/pathwidth, several problems including constraints on distances remain $W[1]$-hard even when parameterized by such structural parameters, plus solution size. We can cite recent hardness results for $d$-\textsc{Scattered Set}~\cite{KLP22}, whose goal is to find a large set of vertices at pairwise distance at least $d$ or, even closer to our problems, \textsc{Geodetic set}~\cite{KeKo20}, where one aims to find a small set of terminals of the input graph 
such that the set of all shortest paths between every pair of terminals covers the graph.

Another natural question is the study of these problems on directed graphs. Note that \textsc{Isometric Path Partition}, the partition version of \IPC, was proved to be W[1]-hard on DAGs for the parameter solution size in~\cite{FloManuscritPP}. Moreover, there are tournaments (orientations of cliques) whose vertices can be covered by a unique shortest path (create a tournament from a directed Hamiltonian path, and orient all other arcs in the reverse direction), in particular the treewidth of the underlying undirected graph is not bounded. A DAG that can be covered by two directed paths but whose underlying undirected graph has large treewidth is the following: consider two disjoint directed paths, and add all arcs from the vertices of the first path to all vertices of the second one. Thus, our results cannot be directly extended to digraphs, not even to DAGs.

\bibliographystyle{abbrv}
\bibliography{main}

\end{document}